\newtheorem{remark}{Remark}
\newtheorem{theorem}{Theorem}
\newtheorem{example}{Example}
\DeclareMathOperator{\sgn}{sgn}
\newcommand{\bR}{\ensuremath{\mathbb R}}
\newcommand{\dB}{\text{dB}}
\newcommand{\bchi}{\bm{\chi}}
\newcommand{\bChi}{\bm{\chi}}
\newcommand{\bx}{\mathbf{x}}
\newcommand{\bz}{\mathbf{z}}
\newcommand{\bC}{\mathbf{C}}
\newcommand{\bX}{\mathbf{X}}
\newcommand{\bI}{\mathbf{I}}
\newcommand{\bS}{\mathbf{S}}
\newcommand{\bDelta}{\mathbf{\Delta}}
\begin{document}

\title{Coordinating Complementary Waveforms for Suppressing Range Sidelobes in a Doppler Band}
\author{Wenbing Dang, \textit{Member}, \textit{IEEE}, Ali Pezeshki, \textit{Member},
\textit{IEEE}, \\ Stephen D. Howard, William Moran,
\textit{Member}, \textit{IEEE}, \\ and Robert Calderbank, \textit{Fellow},
\textit{IEEE}}\maketitle

\let\thefootnote\relax\footnotetext{Submitted to the IEEE Transactions on Signal Processing, August 12, 2019. This work was supported in part by NSF under Grants
CCF-0916314, CCF-1018472, and CCF-1422658 and by AFOSR, AFRL,  and AOARD,  under contracts
FA9550-09-1-0518, FA2386-15-1-4066, FA2386-13-1-4080. Preliminary versions of parts of this paper were reported in the Forty Fifth Asilomar Conference on Signals, Systems, and Computers, Pacific Grove, CA, Nov. 6-9, 2011 (see \cite{Dang_Asilomar11}) and the 2012 IEEE Statistical Signal Processing Workshop, Ann Arbor, MI, Aug. 5-8, 2012 (see \cite{Dang_SSP12}).} 
\footnotetext{W. Dang was with the Department of Electrical and Computer
Engineering, Colorado State University,
Fort Collins, CO, 80523-1373, USA. He is now with Argo AI, Mountain View, CA, 94041, USA  (e-mail: dwb87514@gmail.com).} 
\footnotetext{A. Pezeshki (corresponding author) is with the Department of Electrical and Computer 
Engineering, and the Department of Mathematics, Colorado State University,
Fort Collins, CO, 80523-1373, USA (e-mail: Ali.Pezeshki@colostate.edu).}
\footnotetext{S. D. Howard is with the Defence Science and Technology Group, P.O. Box
1500, Edinburgh, SA, 5111, Australia (e-mail: Stephen.Howard@dsto.defence.gov.au).}
\footnotetext{W. Moran is with the Department of Electrical and Electronic Engineering, The University of Melbourne, Melbourne, VIC., 3010, Australia
(e-mail: wmoran@unimelb.edu.au).}
\footnotetext{R. Calderbank is with the Department of Electrical Engineering, the Department of Computer Science, and the Department of Mathematics at Duke University,
Durham, NC 27708 USA (e-mail: robert.calderbank@duke.edu).}
\begin{abstract}
We present a general method for constructing radar transmit pulse trains and receive filters for which the radar point-spread function in delay and Doppler (radar cross-ambiguity function) is essentially free of range sidelobes inside a Doppler interval around the zero-Doppler axis. The transmit and receive pulse trains are constructed by coordinating the transmission of a pair of Golay complementary waveforms across time according to zeros and ones in a binary sequence $P$. In the receive pulse train filter, each waveform is weighted according to an element from another sequence $Q$. We show that the spectrum of essentially the product of $P$ and $Q$ sequences controls the size of the range sidelobes of the cross-ambiguity function. We annihilate the range sidelobes at low Doppler by designing the $(P,Q)$ pairs such that their products have high-order spectral nulls around zero Doppler. We specify the subspace, along with a basis, for such sequences, thereby providing a general way of constructing $(P,Q)$ pairs. At the same time, the signal-to-noise ratio (SNR) at the receiver output, for a single point target in white noise, depends only on the choice of $Q$. By jointly designing the transmit-receive sequences $(P,Q)$, we can maximize the output SNR subject to achieving a given order of the spectral null. The proposed $(P,Q)$ constructions can also be extended to sequences consisting of more than two complementary waveforms; this is done explicitly for a library of Golay complementary quads. Finally, we extend the construction of $(P,Q)$ pairs to multiple-input-multiple-output (MIMO) radar, by designing transmit-receive pairs of paraunitary waveform matrices whose matrix-valued cross-ambiguity function is essentially free of range sidelobes inside a Doppler interval around the zero-Doppler axis.
\end{abstract}
  
\begin{IEEEkeywords} Complementary sequences, Doppler resilience, Joint transmit-receive design, MIMO radar, MIMO cross-ambiguity, range sidelobe suppression, waveform coordination.
\end{IEEEkeywords}

\section{Introduction}

Phase coding \cite{Levanon-book} is a common technique in radar for generating waveforms with impulse-like autocorrelation functions for localizing targets in range. In this technique, a long pulse is phase coded with a unimodular (biphase or polyphase) sequence and the autocorrelation function of the coded waveform is controlled through the autocorrelation function of the unimodular sequence. Examples of sequences that produce good autocorrelation functions are Frank codes \cite{Frank-IT63}, Barker sequences \cite{Barker-book}, and generalized Barker sequences by Golomb and Scholtz \cite{Golomb-IT65}, polyphase sequences by Heimiller \cite{Heimiller-IT61}, and polyphase codes by Chu \cite{Chu-IT72}. It is however impossible to obtain an impulse autocorrelation with a single unimodular sequence. This has led to the idea of using complementary sets of unimodular sequences \cite{Golay-IRE61}--\nocite{Welti-IT60,Turyn-IT63,Taki-IT69}\cite{Tseng-IT72} for phase coding.

Golay complementary sequences (Golay pairs), introduced by Marcel Golay \cite{Golay-IRE61}, have the property that the sum of their autocorrelation functions vanishes at all nonzero lags. Thus, if the waveforms phase coded by Golay complementary sequences (called from here on Golay complementary waveforms) are transmitted separately in time, and their complex ambiguity functions are summed, the result  is essentially an impulse in range along the zero-Doppler axis. In some sense, this makes Golay complementary waveforms ideal for separating point targets in range when the targets have the same Doppler frequency. The concept of complementary sequences has been generalized to multiple complementary codes by Tseng and Liu \cite{Tseng-IT72}, and to multiphase (or polyphase) sequences by Sivaswami \cite{Sivaswami-IT78} and Frank \cite{Frank-IT80}. Properties of complementary sequences, their relationship with other codes, and their applicability in radar have been studied in several articles among which are \cite{Golay-IRE61}--\nocite{Turyn-IT63,Welti-IT60,Taki-IT69,Tseng-IT72,Sivaswami-IT78}\cite{Frank-IT80}. Golay complementary codes have also been used, in conjunctions with space-time coding techniques, to develop waveform matrices with desired correlation and cross-correlation properties \cite{Willett_Golay}.  

In practice, however, a major obstacle exists to adopting Golay complementary sequences for radar; The perfect auto-correlation property of these sequences is extremely sensitive to Doppler shift. Off the zero-Doppler axis the impulse-like response in range is not maintained and the sum of the ambiguity functions of the waveforms has large range sidelobes. In consequence, a weak target  located in range near a strong reflector with a different Doppler frequency may be masked by the range sidelobes of the radar ambiguity function centered at the  delay-Doppler position of the stronger reflector. This is particularly problematic for detecting targets in the presence of clutter, because clutter often occupies a Doppler frequency band around zero. All generalizations of Golay complementary sequences, including multiple complementary sequences and polyphase complementary sequences, suffer from the same problem to varying degrees.

In \cite{Pezeshki-IT08} and \cite{Chi-book}, we showed that, by coordinating the transmission of a pair of Golay complementary waveforms in a pulse train according to the zeros and ones in a binary sequence, called the Prouhet-Thue-Morse (PTM) sequence (see, e.g., \cite{Allouche-SETA98}), we can produce a pulse train whose ambiguity function is essentially free of range sidelobes in a narrow interval around the zero-Doppler axis, and thereby achieving Doppler resilience. We also extended this idea to constructing a PTM sequence of two by two paraunitary waveform matrices that maintain their paraunitary property at modest Doppler shifts. A generalized PTM sequence was subsequently used in \cite{Tang_PTM} for creating Doppler resilience in multiple-input-multiple-output (MIMO) radar transmissions. Our original PTM construction for single channel radar has recently been tested in simulation for Doppler resilience in automotive radar \cite{PTMauto}.   

In this paper, we extend the idea that we introduced in \cite{Pezeshki-IT08} to the {\em joint design} of transmit pulses and receive filters. We develop a systematic approach to sequencing and weighting of Golay complementary waveforms in the transmit pulse train and the receive filter, respectively, to essentially annihilate the range sidelobes of the radar point-spread function inside a Doppler interval around the zero-Doppler axis. We construct the transmit pulse train by coordinating the transmission of a pair of Golay complementary waveforms across consecutive pulse repetition intervals according to a binary sequence $P$ (of $0$s and $1$s). The pulse train used in the receive filter is constructed with the same sequencing of Golay waveforms across time, but each waveform in the pulse train is weighted according to an element of a  nonnegative sequence  $Q$. We call such a transmit-receive pair of pulse trains a $(P,Q)$ pair.    

We show that the size of the range sidelobes of the cross-ambiguity function of a (P,Q) pair, which constitutes the radar point spread function in range and Doppler, is controlled by the spectrum of essentially the product of $P$ and $Q$ sequences in a very precise way. By selecting sequences for which the spectrum of their product has a higher-order null around zero Doppler, we can annihilate the range sidelobe of the cross ambiguity function inside an interval around the zero-Doppler axis. At the same time, the signal-to-noise ratio (SNR) at the receiver output, defined as the ratio of the peak of the squared cross-ambiguity function to the noise power at the receiver output, depends only on the choice of $Q$. By jointly designing the transmit-receive sequences $(P,Q)$, we can maximize the output SNR subject to achieving a given order of the spectral null. 

We discuss two specific $(P,Q)$ designs in detail; namely, the PTM design and the binomial design. In the former, the transmit sequence $P$ is the binary PTM sequence of length $N$ and the weighting sequence $Q$ at the receiver is the all-$1$s sequence. In this case, the output SNR in white noise is maximum, as the receiver filter is in fact a matched filter, but  the order of the spectral null is only  logarithmic in the length $N$ of the transmit pulse train. This design was originally introduced in \cite{Pezeshki-IT08}. We present it in this paper for comparison with other designs. In the binomial design, on the other hand,  $P$ is the alternating binary sequence of length $N$ and $Q$ is the sequence of binomial coefficients in the binomial expansion $(1+z)^{N-1}$. In this case, the order of the spectral null is $N-2$; the largest that can be achieved with a pulse train of length $N$, but  this comes at the expense of SNR.

For general designs, we derive a necessary and sufficient condition for achieving an $M$th-order spectral null with length-$N$ $(P,Q)$ pairs. More specifically, we determine the subspace of length $N$ sequences that have spectral nulls of order $M$ at zero, and identify a basis for this subspace. We also present an alternative characterization of such sequences via a null space condition. We then formulate the problem of designing $(P,Q)$ pairs as one of maximizing the output SNR subject to the subspace constraint for achieving a null of a given order. The signs of the elements of the solution determine the sequence $P$ and the moduli of its elements determine the sequence $Q$. 

We also propose a systematic extension of the above $(P,Q)$ pulse trains to waveform libraries with more than two complementary waveforms. Here, we construct $2^m$-ary sequences $P_m$ for coordinating the transmission of $2^m$ complementary waveforms and the corresponding $Q_m$ sequences for weighting them in the receive filter to suppressing range sidelobes. We present an explicit example of such designs for Golay complementary quads.  

Finally, we extend the construction of $(P,Q)$ pairs to multiple-input-multiple-output (MIMO) radar, by designing transmit-receive pairs of paraunitary waveform matrices whose matrix-valued cross-ambiguity function is essentially free of range sidelobes inside a Doppler interval around the zero-Doppler axis.

\begin{remark}
We note that this paper concerns the construction of transmit-receive pairs of complementary waveforms that exhibit Doppler resilience, and not the design of unimodular sequences with Doppler resilience,  which has been studied by several authors. For example, in \cite{Sivaswami-AES82} and \cite{Bell-IT98} a class of near complementary  codes, called subcomplementary codes, that exhibits some tolerance to Doppler shift has been introduced. The term near complementary means that the sum of the autocorrelations of the sequences is not an impulse and has modest sidelobes in delay. Also a large body of work exists concerning the design of single polyphase sequences that have Doppler tolerance. A few examples are Frank codes \cite{Frank-IT63}, $P1$, $P2$, $P3$, and $P4$ sequences \cite{Lewis-AES83}, $PX$ sequences \cite{Rapajik-CommLett98}, and $P(n,k)$ sequences \cite{Felhauer-ELett92},\cite{Felhauer-AES94}. The design of Doppler tolerant polyphase sequences has also been considered for MIMO radar \cite{Khan-SPL06} and for orthogonal netted radar \cite{Deng-SP04}.
\end{remark}

\section{Complementary Waveforms and Their Ambiguities}

Let $\Omega(t)$ denote a baseband pulse shape with duration limited to
a chip interval $T_c$ and unit energy:
\begin{equation}
\int_{-T_c/2}^{T_c/2}|\Omega(t)|^2 dt = 1.
\end{equation}
The ambiguity function $\chi_{\Omega}(\tau,\nu)$ of $\Omega(t)$ is 
\begin{equation}
\chi_{\Omega}(\tau,\nu)=\int_{-T_c}^{T_c}\Omega(t)\Omega^{*}(t-\tau)e^{-j\nu t}dt,
\end{equation}
where $\tau$ and $\nu$ are delay and Doppler frequency variables, respectively, and $^*$ denotes complex conjugation.
 
A baseband waveform constructed by phase coding translates of $\Omega(t)$ with a length $L$ unimodular sequence $w[n]$ can be be expressed as     
\begin{equation}
  \label{eq:sum_of_chips}
  w(t)=\sum\limits_{\ell=0}^{L-1}w[\ell]\Omega(t-\ell T_c).
\end{equation}
The energy of $w(t)$ is 
\begin{align}\label{eq:Ew}
E_w &=\int_{\bR} |w(t)|^2 dt \nonumber\\
&=\left(\sum\limits_{\ell=0}^{L-1} w[\ell]^2\right)\int\limits_{-T_c/2}^{T_c/2} |\Omega(t)|^2 dt\nonumber\\
&=L.
\end{align}
The ambiguity $\chi_{w}(\tau,\nu)$ of $w(t)$ at delay-Doppler coordinates $(\tau,\nu)$ is
\begin{multline}\label{eq:chi_x}
\chi_{w}(\tau,\nu)=\int_{-\infty}^{\infty}w(t)w(t-\tau)^*e^{-j
\nu t} dt \\
=\sum_{k=-(L-1)}^{L-1}A_w(k,\nu T_c)\chi_{\Omega}(\tau-kT_c,\nu),
\end{multline}
where 
\begin{equation}
A_w(k,\nu T_c)=\sum_{\ell=0}^{L-1}w[\ell]w[\ell-k]^*e^{-j\nu\ell T_c}
\end{equation}
for $k=-L-1, -L, \ldots, L-1$.

\textit{Definition 1: Golay Complementary Pair 
  \cite{Golay-IRE61}}. Two length $L$ unimodular sequences of complex
numbers $x[\ell]$ and $y[\ell]$ form a \emph{Golay
complementary pair} if, for all lags
$k=-(L-1),-(L-2),\ldots,(L-1)$, their summed  autocorrelation functions
satisfies
\begin{equation}
C_x[k]+C_y[k]=2L\delta[k],
\end{equation}
where $C_x[k]$ and $C_y[k]$ are the autocorrelations of $x[\ell]$ and $y[\ell]$ at lag $k$, respectively, and $\delta[k]$ is the Kronecker
delta function. From here on, we may drop the discrete time index $\ell$ from $x[\ell]$ and
$y[\ell]$ and simply use $x$ and $y$ when appropriate. 

\begin{remark}\label{rm:golayconst}
The sequences $[1,1]$ and $[1,-1]$ are Golay complementary. Golay complementary sequences of length $2^{m+1}$ can be constructed recursively from Golay complementary sequences of length $2^{m}$ for $m\ge 1$. Let $\mathbf{G}_{2^m}$ be a $2^m \times 2^m$ matrix, in which every pair of rows is Golay complementary. Partition $\mathbf{G}_{2^m}$ as
\begin{equation}
\mathbf{G}_{2^m}=\begin{bmatrix} \hfill \mathbf{F}_{1} \\ \hfill \mathbf{F}_{2} \end{bmatrix},
\end{equation} 
where $\mathbf{F}_{1}$ and $\mathbf{F}_{2}$ are $2^{m-1}\times 2^m$ matrices. Then, $\mathbf{G}_{2^{m+1}}$ can be constructed from $\mathbf{G}_{2^m}$ as \cite{Golay-IRE61}
\begin{equation}
\mathbf{G}_{2^{m+1}}=\begin{bmatrix} \hfill\mathbf{F}_{1} & \hfill \mathbf{F}_{2} \\ \hfill\mathbf{F}_{1} & \hfill-\mathbf{F}_{2}\\ \hfill \mathbf{F}_{2} & \hfill \mathbf{F}_{1} \\ \hfill\mathbf{F}_{2} & \hfill-\mathbf{F}_{1}\end{bmatrix}.
\end{equation}
The following example shows the construction of Golay complementary sequences of length four from those of length two:  
\begin{equation}
\begin{bmatrix} \hfill 1 & \hfill 1 \\ \hfill 1 & \hfill -1 \end{bmatrix}\longrightarrow \begin{bmatrix} \hfill 1 & \hfill 1 & \hfill 1 & \hfill -1 \\ \hfill 1 & \hfill 1 & \hfill -1 & \hfill 1 \\ \hfill 1 & \hfill -1 & \hfill 1 & \hfill 1 \\ \hfill 1 & \hfill -1 & \hfill -1 & \hfill -1 \end{bmatrix}.
\end{equation}
\end{remark}

A pair of baseband waveforms $x(t)$ and $y(t)$, phase
coded by length-$L$ Golay complementary sequences $x$ and
$y$: that is, 
\[
x(t)=\sum\limits_{\ell=0}^{L-1}x[\ell]\Omega(t-\ell T_c)
\] 
and 
\[
y(t)=\sum\limits_{\ell=0}^{L-1}y[\ell]\Omega(t-\ell T_c), 
\]
individually, have ambiguity functions $\chi_x$ and $\chi_y$, as in \eqref{eq:chi_x}, with $x$ and $y$ replacing $w$. 

Separating the transmissions of $x(t)$ and $y(t)$ in time by a pulse repetition interval (PRI) of  
$T$ seconds results in the radar waveform $z(t)=x(t)+y(t-T)$ with the ambiguity function  
\begin{equation}\label{eq:Xs}
\chi_{z}(\tau,\nu)=\chi_{x}(\tau,\nu)+e^{-j \nu
T}\chi_{y}(\tau,\nu).
\end{equation}
\begin{remark}\label{rem:001}
The ambiguity function of $z(t)$ has two range aliases (cross terms)  offset from the zero-delay
axis by $\pm T$. In this paper, we ignore  the range aliasing effects and refer to $\chi_{z}(\tau,\nu)$ as the ambiguity
function of $z(t)$. Range aliasing effects can be accounted for using standard techniques (for instance,  see \cite{Levanon-book}) and will not be further discussed.
\end{remark}

As the duration $LT_c$ of the waveforms $x(t)$ and $y(t)$ is typically much shorter than the
PRI duration $T$, the Doppler shift over $LT_c$, i.e., $\nu LT_c$ is negligible compared to the Doppler shift over a PRI, i.e., $\nu T$, for 
practically feasible targets . \emph{A fortiori} this applies to the Doppler shift over an 
individual chip (a single translate of $\Omega$). As a result, we can approximate $\chi_{z}(\tau,\nu)$ as 
\begin{multline}
\chi_{z}(\tau,\nu) = \sum\limits_{k=-(L-1)}^{L-1}A_x(k,0)\chi_{\Omega}(\tau-kT_c,0)\\ 
 +  e^{j\nu T} \sum\limits_{k=-(L-1)}^{L-1}A_y(k,0)\chi_{\Omega}(\tau-kT_c,0).
\end{multline} 
Noting that $A_x(k,0)=C_x[k]$, $A_y(k,0)=C_y[k]$, and defining $C_\Omega(\tau)$ as the autocorrelation function of $\Omega_t$, that is, $C_\Omega(\tau)= \chi_{\Omega,0}(\tau)$, we have 
\begin{multline}\label{chi_s}
\chi_z(\tau,\nu)=\\
\sum\limits_{k=-(L-1)}^{L-1}\left(C_x[k]+e^{-j\nu
T}C_y[k]\right)C_{\Omega}(\tau-kT_c).
\end{multline}

Along the zero-Doppler axis ($\nu=0$), the ambiguity function
$\chi_z(\tau,\nu)$ reduces to
\begin{equation}
\chi_z(\tau,0)=2L C_\Omega(\tau),
\end{equation}
by complementarity of the Golay pair $x$ and $y$.  We observe that
the ambiguity function $\chi_z(\tau,\nu)$ is ``free'' of range
sidelobes along the zero-Doppler axis. However, it is known (see, e.g., \cite{Levanon-book}) that off
the zero-Doppler axis the ambiguity function has large sidelobes in
delay (range).

\begin{remark}
The shape of $\chi_z(\tau,0)$ in delay depends on the shape of the autocorrelation function
  $C_{\Omega}(\tau)$ of $\Omega(t)$. The Golay
  complementary property eliminates range sidelobes caused by
  replicas of $C_\Omega(\tau)$ at nonzero lags.
\end{remark}

\begin{remark}\label{rem:111}
One might think that separating Golay complementary waveforms in frequency (transmitting them over non-interfering frequency bands) would also result in range sidelobe  cancellation along the zero-Doppler axis. But this is not the case, as the
presence of delay-dependent phase terms impairs the complementary
property of the waveforms. Searle and Howard
\cite{Searle-WDDC07, Searle-RSC07, Searle09} have introduced 
modified Golay pairs for OFDM channel models. These modified Golay pairs
are complementary in the sense that the sum of their squared autocorrelation
functions forms an impulse in range.
\end{remark}
 
\section{$(P,Q)$ Pulse Trains of Complementary Waveforms}\label{sc:PQbinary}

We now consider the effect of transmitting a longer sequence of Golay
pairs as a pulse train over multiple PRIs. 

\textit{Definition 2: } Let $P=\{p_n\}_{n=0}^{N-1}$ be a
binary sequence of length $N$. The \textit{$P$-pulse train}
$z_{P}(t)$ is defined as 
\begin{equation}
z_{P}(t)=\sum\limits_{n=0}^{N-1}p_nx(t-nT)+\overline{p}_ny(t-nT),
\end{equation}
where $\overline{p}_n=1-p_n$ is the complement of $p_n$. The $n$th
pulse  in
$z_{P}(t)$ is $x(t)$ if $p_n=1$ and is $y(t)$ if
$p_n=0$, and  consecutive pulses  are
separated in time by a PRI $T$.  

\textit{Definition 3: } Let $Q=\{q_n\}_{n=0}^{N-1}$ be a
discrete real nonnegative sequence ($q_n\ge 0$) of length $N$. The \textit{$Q$-pulse train} $z_{Q}(t)$ is defined as 
\begin{equation}
z_{Q}(t)=\sum_{n=0}^{N-1}q_{n}\bigl[p_nx(t-nT)+\overline{p}_ny(t-nT)\bigr].
\end{equation}
The $n$th pulse  in $z_{Q}(t)$ is obtained by multiplying
the $n$th pulse in  $z_{P}(t)$ by $q_n$.  

We refer to the
pair $(z_{P}, z_{Q})$ as the $(P,Q)$-\emph{transmit-receive pair} or just \emph{$(P,Q)$-pair}. Transmitting $z_{P}(t)$ and filtering the return by
(correlation with) $z_{Q}(t)$ results in a  point-spread function
(in delay and Doppler) that is given by the cross-ambiguity function between $z_{P}(t)$
and $z_{Q}(t)$:  
\begin{equation}\begin{split}\label{cross_amb}
\chi_{{PQ}}(\tau, \nu)
&=\int_{\bR}z_{P}(t)z_{Q}(t-\tau)^*e^{-j\nu t}dt \\
&=\sum_{n=0}^{N-1}q_ne^{-j\nu nT}\bigl[p_n\chi_{x}(\tau,\nu)
+\overline{p}_n\chi_{y}(\tau,\nu)\bigr],
\end{split}\end{equation}
where, as in  \eqref{eq:Xs}, range aliases at offset $\pm nT$,
$n=1,2,...,N-1$ are ignored. 

\begin{remark}
When $q_n=1$ for $n=0,...,N-1$, the receiver is a matched filter that matches to the transmitted pulse train
$z_{P}(t)$ and (\ref{cross_amb}) reduces to the ambiguity
function of $z_{P}(t)$. The joint design of $P$ and $Q$ provides  more flexibility in tailoring the shape of the radar cross ambiguity function, as will be demonstrated in the next section.
\end{remark}

As in \eqref{chi_s}, $\chi_{{PQ}}(\tau, \nu)$ is well approximated by
\begin{multline}\label{eq:amb1}
\chi_{{PQ}}(\tau, \nu)
=\sum_{n=0}^{N-1}q_ne^{-j\nu nT}\\ \times
\sum_{k=-(L-1)}^{L-1}\bigl[p_nC_x[k]+\overline{p}_nC_y[k]\bigr]C_{\Omega}(\tau-kT_c).
\end{multline}
A key fact to take from \eqref{eq:amb1} is that it is a linear
combination of translates of $C_\Omega$, and of course that this
function has support twice the chip-length about the
origin. Accordingly, it is convenient to leave this aside and consider
\begin{multline}\
\chi_{PQ}\left(k,\theta\right)\\
=\sum_{k=-(L-1)}^{L-1}\bigl[p_nC_x[k]+\overline{p}_nC_y[k]\bigr]\sum_{n=0}^{N-1}q_ne^{-j\nu T}, 
\end{multline}
which, after some simple algebraic manipulations, we can write as  
\setlength\arraycolsep{0pt}
\begin{multline}\label{eq:ptmamb}
\chi_{PQ}\left(k,\theta\right)=\frac{1}{2}\bigl[C_x[k]+C_y[k]\bigr]\sum\limits_{n=0}^{N-1}q_{n}e^{jn\theta}
\\ -\frac{1}{2}\bigl[C_x[k]-C_y[k]\bigr]\sum\limits_{n=0}^{N-1}(-1)^{p_{n}}q_{n}
e^{jn\theta},
\end{multline}
where $\theta=\nu T$ is the relative Doppler shift over a PRI $T$. 

The form in \eqref{eq:ptmamb} is particularly convenient for studying range sidelobes. Since
$(x,y)$ is a Golay pair, $C_x[k]+C_y[k]=2L\delta[k]$
and the first term on the right-hand-side of \eqref{eq:ptmamb} is free
of range sidelobes. The second term, then, provides the obstacle to a
perfect cross-ambiguity, and is to be controlled by choice of $P$ and
$Q$.

The main question to be addressed is: \emph{Can sequences $P$ and $Q$ be designed
so that the cross-ambiguity $\chi_{PQ}\left(k,\theta\right)$ is essentially a Kronecker delta in
delay, at least for some range of Doppler frequencies?}

\begin{remark}
In our previous work  \cite{Pezeshki-IT08}, we looked only at designing the sequence $P$ and simply took $Q$ to be an all one sequence. In the present paper, we show that the joint design of $P$ and $Q$  considerably enriches our choices for suppressing range sidelobes around zero Doppler.   
\end{remark}

\section{Range Sidelobe Suppression}\label{sc:RSS}

The  spectrum of the sequence $r_n=(-1)^{p_n}q_n$, $n=0,\ldots,N-1$,
\begin{equation}\label{eq:Sp}
S_{PQ}(\theta)=\sum_{n=0}^{N-1}r_ne^{jn\theta},
\end{equation}
is the key component of the term with range sidelobes in \eqref{eq:ptmamb}. By selecting sequences $P$ and $Q$ such that $S_{PQ}(\theta)$ has a high order null at $\theta=0$, the range sidelobes in a Doppler interval around zero can be suppressed. 

We begin with two examples of the kind of effects that $(P,Q)$ pairs can achieve. We first present the \emph{Prouhet-Thue-Morse (PTM)} design from our previous work \cite{Pezeshki-IT08,Chi-book}. In this design, the binary sequence $P$ is a PTM sequence (see, e.g., \cite{Allouche-SETA98}) of length $N$, where $N$ is a power of $2$. The sequence $Q$ is an all one sequence, meaning that the receiver is a matched filter. This design achieves a spectral null of order $\log_2 N-1$ around $\theta=0$ in $S_{PQ}(\theta)$. In the second example, the {\em binomial} design, $P$ is an alternating binary sequence of length $N$, meaning that the transmitter alternates between the two Golay complimentary waveforms $x(t)$ and $y(t)$ in consecutive PRIs. The sequence $Q$ is the sequence of binomial coefficients in the binomial expansion $(1+z)^{N-1}$. This design achieves a spectral null of order $N-2$ around $\theta=0$ in $S_{PQ}(\theta)$, which is the highest order null achievable with $(P,Q)$ pairs. 

After these two examples, we derive a general way of constructing $(P,Q)$ pairs of length $N$ for achieving a spectral null of order $M\le N-2$. We also lower bound the peak-to-peak-sidelobe ratios of the cross-ambiguity function associated with such $(P,Q)$ pairs.  

Later, in Section \ref{sc:snr}, we derive an expression for the output SNR of $(P,Q)$ pairs, for a single point target in white noise, and discuss the construction of maximum SNR $(P,Q)$ pairs that achieve a given order of spectral null. The PTM design has maximum SNR, because it uses a matched filter at the receiver. The binomial design uses a different receiver which enables us to produce the largest order of null possible, at the expense of SNR.      
  
 \subsection{PTM vs. Binomial Design}
The following theorem was proved in \cite{Pezeshki-IT08}:
\begin{theorem}[PTM Design]
  \label{thm:1} Let
$P=\{p_n\}_{n=0}^{N-1}$ be the length $N=2^{M+1}$, $M\ge 1$, Prouhet-Thue-Morse
(PTM) sequence, defined
recursively by $p_{2k}=p_{k}$ and $p_{2k+1}=1-p_{k}$ for all $k\geq
0$, with $p_0=0$, and let $Q=\{q_n\}_{n=0}^{N-1}$ be the 
sequence $1$s of length $N=2^{M+1}$. Then, $S_{{P,Q}}(\theta)$ has an
$M$th-order null at $\theta=0$.
\end{theorem}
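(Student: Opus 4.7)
The plan is to exhibit an explicit product factorization of $S_{PQ}(\theta)$ that makes the order of the zero at $\theta=0$ transparent. Since $Q$ is the all-ones sequence, $S_{PQ}(\theta)=A_{2^{M+1}}(e^{j\theta})$, where $A_N(z):=\sum_{n=0}^{N-1}(-1)^{p_n}z^n$, so the problem reduces to determining the multiplicity of the zero of $A_{2^{M+1}}(z)$ at $z=1$.

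First I would translate the PTM recursion $p_{2k}=p_k,\ p_{2k+1}=1-p_k$ into the sign recursion $(-1)^{p_{2k}}=(-1)^{p_k}$ and $(-1)^{p_{2k+1}}=-(-1)^{p_k}$. Splitting $A_{2K}(z)$ into its even- and odd-indexed summands and extracting the common factor $(1-z)$ from the two halves gives the clean recursion $A_{2K}(z)=(1-z)\,A_K(z^2)$. Starting from $A_1(z)=1$ and iterating, an elementary induction on $M$ produces the closed-form product
\[
A_{2^{M+1}}(z) \;=\; \prod_{m=0}^{M}\bigl(1 - z^{2^m}\bigr).
\]

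Second, I would substitute $z=e^{j\theta}$ and Taylor-expand each factor near $\theta=0$. Since $1-e^{j2^m\theta}=-j\,2^m\,\theta + O(\theta^2)$ for each $m=0,1,\ldots,M$, multiplying these $M+1$ simple zeros together yields
\[
S_{PQ}(\theta) \;=\; (-j)^{M+1}\,2^{M(M+1)/2}\,\theta^{M+1} + O(\theta^{M+2}),
\]
so $S_{PQ}$ vanishes through order $\theta^{M+1}$ at the origin, which is exactly the $M$th-order null asserted by the theorem.

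No step should be technically hard; the only non-routine ingredient is spotting the even/odd splitting that exposes the factor $(1-z)$ and produces the recursion $A_{2K}(z)=(1-z)A_K(z^2)$. After that, the induction and the Taylor expansion of each factor at $\theta=0$ are mechanical bookkeeping.
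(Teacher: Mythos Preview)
Your argument is correct. The even/odd splitting yields the recursion $A_{2K}(z)=(1-z)A_K(z^2)$ exactly as you claim, the induction to $A_{2^{M+1}}(z)=\prod_{m=0}^{M}(1-z^{2^m})$ is immediate, and since each factor has a simple zero at $z=1$ the product has a zero of order $M+1$ there, which is precisely an $M$th-order null at $\theta=0$ in the sense of \eqref{eq:specnullVN}.

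The paper does not actually prove Theorem~\ref{thm:1} here; it cites \cite{Pezeshki-IT08} and instead offers a remark explaining the result through the Prouhet problem: one shows directly that the moments $S_{PQ}^{(m)}(0)=\sum_{n=0}^{N-1}(-1)^{p_n}n^m$ vanish for $m=0,\ldots,M$ by invoking Prouhet's partition of $\{0,\ldots,2^{M+1}-1\}$ into two sets with equal power sums up to exponent $M$. Your route is genuinely different in presentation: rather than appealing to the Prouhet--Tarry--Escott identity as a combinatorial black box, you exhibit the generating-function factorization that makes the zero multiplicity visible at a glance. Your approach is more self-contained and arguably cleaner for this specific statement; the paper's moment viewpoint, on the other hand, ties the result to a classical number-theoretic identity and dovetails with the null-space characterization in Remark~\ref{rm:null}, where the same moment conditions reappear as $\mathbf{V}_M\mathbf{r}=\mathbf{0}$. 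The two are of course equivalent---your product formula is in fact the standard device used to \emph{prove} Prouhet's result---but they emphasize different structures.
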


\begin{example}
 The PTM sequence of length $N=8$ is $P=\{p_k\}_{k=0}^{7} \ = \ 0
\ 1 \ 1 \ 0\ 1\ 0\ 0\ 1$, and  the corresponding $P$-pulse train is
\begin{multline*}
z_{P}(t)= x(t)+y(t-T)+y(t-2T)+x(t-3T)\\
+y(t-4T)+x(t-5T)
+x(t-6T)+y(t-7T).
\end{multline*}
The receive filter pulse train, $z_{Q}(t)$, is chosen to be the same
as the $P$-pulse train. The order of the null of $S_{P,Q}(\theta)$ is $M=\left(\log_{2}N\right)-1=2$.
\end{example}

\begin{remark}
The first $M$ moments of $S_{PQ}(\theta)$ about $\theta=0$ are 
\[
S_{PQ}^{(m)}(0)=\sum\limits_{n=0}^{N-1} (-1)^{p_n} n^m, \quad m=0,1,\ldots, M.
\]
Forcing these moments to vanish requires balancing out the sum of the powers of integers that get positive signs with the sum of those that get negative signs. This is where the PTM sequence comes in. Let $\mathbb{S}=\{0,1,\ldots,N-1\}$ be 
the set of all integers between $0$ and $N-1$. The Prouhet problem is the following. Given $M$, is it possible to partition $\mathbb{S}$ into two disjoint subsets $\mathbb{S}_0$ and $\mathbb{S}_1$ such that 
\[
\mathop{\sum}\limits_{k\in \mathbb{S}_0}k^m=\mathop{\sum}\limits_{l\in \mathbb{S}_1}l^m
\]
for all $0\le m\le M$? Prouhet proved that this is possible when $N=2^{M+1}$ and that the partitions are identified by the PTM sequence. The set $\mathbb{S}_0$ consists of all integers $n\in \mathbb{S}$ where the PTM sequence $p_n$ is zero, and the set $\mathbb{S}_1$ consists of all integers $n\in \mathbb{S}$ where the PTM sequence $p_n$ is one. The reader is referred to \cite{Allouche-SETA98} for a review of problems and results related to the PTM sequence.
\end{remark}

The PTM design for radar transmissions was originally introduced in
\cite{Pezeshki-IT08,Chi-book} in the context of
Doppler resilient waveforms. Here, we further investigate
this design and  compare  it with the Binomial design, to be
described next. 

\begin{theorem}[Binomial Design]
  \label{thm:2}
  Let $P=\{p_n\}_{n=0}^{N-1}$ be the length $N=M+2$, $M\ge 1$, alternating
  sequence, where $p_{2k}=1$ and $p_{2k+1}=0$ for all $k\ge 0$, and
  let $Q=\{q_n\}_{n=0}^{N-1}$ be the length $N=M+2$ binomial sequence
  $\{q_{n}\}_{n=0}^{N-1}=\{\binom{N-1}{n}\}_{n=0}^{N-1}$. Then,
  $S_{{PQ}}(\theta)$ has an $M$th order null at $\theta=0$.
\end{theorem}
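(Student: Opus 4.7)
The plan is to compute the spectrum $S_{PQ}(\theta)$ in closed form using the binomial theorem and then read off the order of vanishing at $\theta=0$.

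First I would substitute the explicit choices of $P$ and $Q$ into the definition $r_n = (-1)^{p_n} q_n$. Since $p_n$ is the alternating sequence with $p_{2k}=1$ and $p_{2k+1}=0$, we have $(-1)^{p_n} = -(-1)^n$. Combined with $q_n = \binom{N-1}{n}$, this gives
\begin{equation*}
r_n = -(-1)^n \binom{N-1}{n}, \qquad n=0,1,\ldots,N-1.
\end{equation*}

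Next I would plug this into the spectrum formula \eqref{eq:Sp} and recognize a binomial expansion:
\begin{equation*}
S_{PQ}(\theta) = -\sum_{n=0}^{N-1} \binom{N-1}{n}(-e^{j\theta})^n = -\bigl(1-e^{j\theta}\bigr)^{N-1}.
\end{equation*}
Factoring $1-e^{j\theta} = -2j e^{j\theta/2}\sin(\theta/2)$, I would then write
\begin{equation*}
S_{PQ}(\theta) = -(-2j)^{N-1} e^{j(N-1)\theta/2} \sin^{N-1}(\theta/2),
\end{equation*}
which makes the behavior near $\theta=0$ completely transparent.

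Finally, since $\sin^{N-1}(\theta/2) = (\theta/2)^{N-1} + O(\theta^{N+1})$ and $N-1 = M+1$, the function $S_{PQ}(\theta)$ vanishes like $\theta^{M+1}$ at the origin. Equivalently, $S_{PQ}^{(m)}(0)=0$ for $m=0,1,\ldots,M$, which by the convention established in Theorem \ref{thm:1} and the subsequent remark is exactly an $M$th-order null. I do not anticipate any genuine obstacle here; the only thing to be careful about is the sign bookkeeping in $(-1)^{p_n}$ and confirming that the convention ``$M$th-order null'' matches the statement of the PTM theorem, so that the exponent $N-1=M+1$ produces the claimed order $M$ rather than $M+1$ or $M-1$.
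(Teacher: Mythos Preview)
Your proposal is correct and follows essentially the same route as the paper: write $S_{PQ}(\theta)$ as $(1-e^{j\theta})^{N-1}$ via the binomial theorem and read off that all derivatives of order $0,\ldots,N-2=M$ vanish at $\theta=0$. The sine factorization you add is an unnecessary but harmless embellishment, and the overall minus sign you pick up (from carefully tracking $p_0=1$) is immaterial to the null order.
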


\begin{proof}
The spectrum $S_{P,Q}(\theta)$ for the alternating sequence $P$ and binomial sequence $Q$ is
\begin{align}\label{sidelobes}
S_{P,Q}(\theta)&=\sum_{n=0}^{N-1}(-1)^{n}\binom{N-1}{n}e^{jn\theta}\nonumber\\ 
&=(1-e^{j\theta})^{N-1}.
\end{align}
It is straightforward to show that $S_{P,Q}^{(m)}(0)=0$ for
$m=0,1,...,N-2$. 
\end{proof}

\begin{example} For $N=8$, the $(P,Q)$ pair is 
\begin{align*}
z_{P}(t)&=x(t)+y(t-T)+x(t-2T)\\
&+y(t-3T)+x(t-4T)+y(t-5T)\\ 
&+x(t-6T)+y(t-7T),\\
z_{Q}(t)&=q_0x(t)+q_1y(t-T)+q_2x(t-2T)\\
&+q_3y(t-3T)+q_4x(t-4T)+q_5y(t-5T)\\
&+q_6x(t-6T)+q_7y(t-7T),
\end{align*}
where $q_n=\binom{7}{n}$, $n=0,1,...,7$. The order of the spectral null for sidelobe suppression is $M=N-2=6$.
\end{example}

\subsection{General ${(P,Q)}$ pair design}\label{sec:2}

We characterize, here, those $(P,Q)$ sequences of length $N > M + 1$
that have  an $M$-th order spectral null. 

Let  $V_{N}$ be the real
vector space of (trigonometric)
polynomials of degree at most $N-1$, regarded as functions on
$[0,2\pi]$: 
\begin{multline}
V_{N}=\{f\colon f(\theta) = \sum_{n=0}^{N-1} r_n e^{j n
  \theta}=\sum_{n=0}^{N-1} r_{n}z^n, \\ \quad z=e^{jn\theta},\ r_n\in \mathbb R,\ \theta\in [0,2\pi]\},
\end{multline}
With some abuse of notation, we  will freely switch between regarding $f$
as a function of $\theta$ and as a function of $z$.

Each $f\in V_N$ corresponds to a $(P,Q)$ sequence of length $N$ through $r_n=(-1)^{p_n}q_n$, $n = 0,...,N-1$, or equivalently
\begin{equation}\label{eq:pn}
p_n = \frac{1}{2}(1-\sgn(r_n)), q_n = |r_n|,\ n = 0,...,N-1.
\end{equation}
Evidently,  the functions $g_n$, $n=0,\ldots,N-1$, given by
\begin{equation}\label{eq:gntheta}
g_n(\theta) = (1-e^{j\theta})^n=(1-z)^{n},
\end{equation}
also form a basis for $V_N$ (besides $\{e^{jn\theta}\}_{n=0}^{N-1}$); in fact, by the Binomial Theorem, 
\begin{equation}\label{eq:eintheta}
e^{j n \theta} = \sum_{k=0}^{N-1} (-1)^{k}\binom{n}{k}\, g_k(\theta), \quad n=0,\cdots, N-1,
\end{equation}
where we use the convention that $\binom{n}{k}=0$ if $k>n$.
We write
\begin{equation}\label{eq:specnullVN}
T_{M}=\{f\in V_{N}\colon \frac{d^{m}f}{d\theta^{m}} (0) = 0, \ \text{for $m=0,\cdots, M$}\} 
\end{equation}
for the (manifestly) linear subspace of  $V_{N}$
consisting of those functions that  have an $M$-th order spectral
null. 
Since $\frac{1}{j}\frac{d}{d\theta}=z\frac{d}{dz}$ and
\begin{equation}
  \label{eq:1}
  \Bigl(z\frac{d}{dz}\Bigr)^{k}=z^{k}\frac{d^{k}}{dz^{k}}+
  \text{ lower order terms in  $\frac{d}{dz}$},
\end{equation}
$f$ (\emph{qua} polynomial in $z$) has a spectral null of order $M$ if and only if
\begin{equation}\label{eq:specnullVN1}
\frac{d^{m}f}{dz^{m}}(1) = 0, \ \text{for $m=0,\cdots, M$}.
\end{equation}
An invocation of Taylor's Theorem, yields the following result.  
\begin{theorem}\label{thm:2a}
	The subspace $T_M \subset V_N$ of functions having an $M$-th order spectral null
	is spanned by $\{g_n |\; n=M+1, \cdots, N-1\}$.
\end{theorem}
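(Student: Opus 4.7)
The plan is to exploit the fact, recorded in \eqref{eq:gntheta}--\eqref{eq:eintheta}, that $\{g_n\}_{n=0}^{N-1}$ is a basis of $V_N$. Every $f\in V_N$ therefore admits a unique expansion $f(z)=\sum_{n=0}^{N-1}c_n g_n(z)$, and the strategy is to translate the $M+1$ vanishing-derivative conditions \eqref{eq:specnullVN1} into conditions directly on the coefficients $c_n$. A dimension check is consistent with this: $T_M$ has codimension $M+1$ in $V_N$, so $\dim T_M = N-M-1$, which matches the cardinality of the proposed spanning set $\{g_{M+1},\ldots,g_{N-1}\}$.

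First I would differentiate the basis functions $g_n(z)=(1-z)^n$ directly. For $m\le n$ one gets $\frac{d^m g_n}{dz^m}(z)=(-1)^m\frac{n!}{(n-m)!}(1-z)^{n-m}$, and the derivative is identically zero for $m>n$. Evaluating at $z=1$ kills every factor $(1-z)^{n-m}$ with $n>m$ and leaves only the diagonal term, giving the clean identity
\[
\frac{d^m g_n}{dz^m}(1)=(-1)^m m!\,\delta_{mn},\qquad 0\le m,n\le N-1.
\]
This is the key computation; everything that follows is essentially bookkeeping.

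Next, differentiating $f=\sum_n c_n g_n$ term by term and invoking the diagonal identity above yields $\frac{d^m f}{dz^m}(1)=(-1)^m m!\,c_m$ for $0\le m\le N-1$. Combining this with the characterization \eqref{eq:specnullVN1} of an $M$-th order spectral null, we conclude that $f\in T_M$ if and only if $c_0=c_1=\cdots=c_M=0$, i.e.\ if and only if $f\in\mathrm{span}\{g_{M+1},\ldots,g_{N-1}\}$. (This is essentially Taylor's theorem expanded around $z=1$, using the substitution $(z-1)^m=(-1)^m g_m$.) Linear independence of $\{g_{M+1},\ldots,g_{N-1}\}$ is automatic, being a subset of the basis $\{g_0,\ldots,g_{N-1}\}$, so these functions do in fact form a basis of $T_M$.

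There is no real obstacle here; the only thing requiring care is the bookkeeping of the $(-1)^m$ signs and the verification that evaluation at $z=1$ genuinely diagonalizes the derivative operators on the $g_n$ basis. The slick form of the identity $\frac{d^m g_n}{dz^m}(1)=(-1)^m m!\,\delta_{mn}$ is what makes the argument essentially immediate; without using the $g_n$ basis, one would instead have to invert a triangular system coming from \eqref{eq:eintheta}, which would be less transparent.
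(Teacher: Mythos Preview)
Your proof is correct and is exactly the approach the paper takes: the paper simply says ``An invocation of Taylor's Theorem yields the following result,'' and what you have written is precisely that invocation spelled out in detail (expanding $f$ about $z=1$ in the basis $g_n(z)=(1-z)^n$ so that the derivative conditions \eqref{eq:specnullVN1} kill the low-order coefficients). Your diagonal identity $\tfrac{d^m g_n}{dz^m}(1)=(-1)^m m!\,\delta_{mn}$ is the concrete content behind the paper's one-line appeal to Taylor.
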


\begin{remark}  $T_{N-2}$ has dimension 1 and consists of real
  multiples of $g_{N-1}$; that is,  
  given a fixed length of
pulse train $N$  the binomial design is the only choice of ${(P,Q)}$
pair (up to a scale factor in $Q$) to achieve the highest order null.
\end{remark}

By Theorem \ref{thm:2a}, any $S_{P,Q}(\theta)=\sum\limits_{n=0}^{N-1}r_n e^{jn\theta}$ with an spectral null of order $M\le N-2$ can be expressed as  
\begin{align}\label{eq:SPQa}
S_{P,Q}(\theta)&=\sum\limits_{m=0}^{N-M-2} a_{m} g_{m+M+1}(\theta)\nonumber\\
&=\sum\limits_{m=0}^{N-M-2} a_{m} (1-e^{j\theta})^{m+M+1}
\end{align}
for some $\mathbf{a}=[a_{0},\ldots,a_{N-M-2}]^T\in \mathbb{R}^{N-M-1}$ (with $\mathbf{a}\neq \mathbf{0}$). By the binomial theorem, we can write 
\begin{align}\label{eq:rcomb}
S_{P,Q}(\theta)&=\sum\limits_{m=0}^{N-M-2} a_m\sum\limits_{n=0}^{m} (-1)^{n}\binom{m+M+1}{n} e^{jn\theta}\nonumber\\
&=\sum\limits_{n=0}^{N-1}\sum\limits_{m=0}^{N-M-2}a_{m}(-1)^n \binom{m+M+1}{n}e^{jn\theta},
\end{align}
where we have used the convention that $\binom{m+M+1}{n}=0$ if $n>m+M+1$. Thus, we obtain  
\begin{equation}
r_n=\left(\sum\limits_{m=0}^{N-M-2}a_{m}(-1)^n \binom{m+M+1}{n}\right)
\end{equation}
or in vector form 
\begin{equation}\label{eq:rBM}
\mathbf{r} = \mathbf{B}_M \mathbf{a} , \ \mathbf{a}\neq \mathbf{0},
\end{equation}
where  $\mathbf{r}=(r_0, \cdots, a_{N-1})^T\in \mathbb R^{N-1}$ and $\mathbf{B}_M$ is an $N\times (N-M-1)$ matrix whose $(n,m)$th entry is 
\begin{equation}\label{eq:BM}
\left(\mathbf{B}_M\right)_{m,n}=(-1)^{n} \binom{m+M+1}{n} ,  
\end{equation}
for $n=0,\ldots, N-1$ and $m=0,\ldots, N-M-2$. In other words, $S_{P,Q}(\theta)$ has an $M$th order null at $\theta=0$ if and only if $\mathbf{r}$ is in the space spanned by the columns of $\mathbf{B}_M$. For each $\mathbf{r}$ constructed in this fashion, we can obtain the corresponding $P$ and $Q$ sequences as in \eqref{eq:pn}, with the convention that $\sgn(0)=1$.  
 
\begin{remark}\label{rm:null}
As an alternative way of characterizing $r_n=(-1)^{p_n}q_n$, $n=0,\ldots,N-1$, we note that the vector $\mathbf{r}\neq 0$ lies in the null
space of an $(M+1)\times N$ integer Vandermonde matrix $\mathbf{V}_M$, whose
$(m,n)$th element is $n^{m}$, $m=0,\ldots,M$ and $n=0,\ldots,N-1$, that is,
\begin{equation}\label{eq:null}
\mathbf{V}_M\mathbf{r}=0. 
\end{equation}
To see this, consider 
the Taylor expansion of $S_{P,Q}(\theta)$
around $\theta=0$:
\begin{equation}
S_{PQ}(\theta)=\sum_{m=0}^{\infty}S_{P,Q}^{(m)}(0)\frac{\theta^m}{m!},
\end{equation}
where  
\begin{equation}
S_{PQ}^{(m)}(0)=\sum_{n=0}^{N-1}n^m (-1)^{p_n}q_n=\sum_{n=0}^{N-1}n^m r_n
\end{equation}
is the $m$-th order derivative of
$S_{P,Q}(\theta)$ at $\theta=0$. We wish to have 
\begin{equation}\label{eq:null_space}
S_{P,Q}^{(m)}(0)=0,\quad m = 0,1,...,M,
\end{equation}
or equivalently,
\begin{equation}\label{eq:null_space1}
\sum_{n=0}^{N-1}n^mr_n = 0,\quad m=0,1,...,M.
\end{equation}
Writing the above condition in matrix form gives the stated null space result. 
\end{remark}

\section{Signal-to-Noise Ratio}\label{sc:snr}

Suppose that the noise process at the receiver input is white with power $N_0$. Then the noise power at the receiver output is
\begin{align}
\eta&=N_0\int_{\bR} |z_{Q}(t)|^2dt  \nonumber \\ 
& =N_0\sum\limits_{n=0}^{N-1}\sum\limits_{m=0}^{N-1}q_n q_m\int_{\bR}^{}\bigl[p_nx(t-nT)+\overline{p}_ny(t-nT)\bigr] \nonumber
\\ &  \quad\quad\quad\quad\quad \times\bigl[p_mx(t-mT)+\overline{p}_my(t-mT)\bigr]^{*} dt \nonumber\\
&=N_0\sum\limits_{n=0}^{N-1} q_n^2 \int_{\bR} |p_nx(t-nT)+\overline{p}_ny(t-nT)|^2 dt,\label{eq:eta1}
\end{align}
where the last equality follows because the durations of $x(t)$ and $y(t)$ are $LT_c<<T$, and therefore the cross-terms are zero. Noting that $p_n$ and $\overline{p}_n$ are binary complements of each other, each term of the summation in the last line of \eqref{eq:eta1} is either the energy of $x(t)$ or the energy of $y(t)$. The energies of $x(t)$ and $y(t)$ are $E_x=E_y=L$ (see \eqref{eq:Ew}). Thus, 
\begin{equation}\label{eq:eta2}
\eta=N_0 L \| \mathbf{q}\|^2=N_0 L \| \mathbf{r}\|^2, 
\end{equation}   
where $\mathbf{q}=[q_0,...,q_{N-1}]^T$ and $\mathbf{r}=[r_0,\ldots,r_{N-1}]^T=[(-1)^{p_0}q_0,\ldots,(-1)^{p_{N-1}}q_{N-1}]^T$. 

For a single point target, the SNR at the receiver output is
\begin{equation}
\rho = \frac{\sigma_b^2|\chi_{P,Q}(0,0)|^2}{\eta}
=\frac{L\sigma_b^2}{N_0}\frac{\|\mathbf{q}\|_1^2}{\|\mathbf{q}\|_2^2}
=\frac{L\sigma_b^2}{N_0}\frac{\|\mathbf{r}\|_1^2}{\|\mathbf{r}\|_2^2},
\end{equation}
where $\sigma_b^2$ is the power of the target.

The output SNR $\rho$ is maximized when $\mathbf{q}=\alpha\mathbf{1}$ for
some  scalar $\alpha>0$, so that $z_{Q}(t)=\alpha z_{P}(t)$ is the usual matched filter. Any sequence $Q$ other
than this results in a reduction in output SNR. On the other hand, as it was shown in Section \ref{sc:RSS}, the
extra degrees of freedom provided by a more general $Q$ can be used to
create a spectral null of higher order, through the joint design of
$P$ and $Q$, than is achievable by designing $P$ alone.

From \eqref{eq:rBM}, or equivalently \eqref{eq:null}, we see that infinitely many designs $\mathbf{r}$ can achieve a null of order $M<N-2$ in $S_{PQ}(\theta)$. But these designs are different in terms of SNR. The design with the largest SNR is the solution $\mathbf{r}_{M}$ to  
\begin{align}
\mathrm{maximize\ } & \ \ \frac{\|\mathbf{r}\|_{1}^{2}}{\|\mathbf{r}\|_{2}^{2}}\nonumber\\
\mathrm{subject \ to} & \ \ \mathbf{V}_M\mathbf{r}=\mathbf{0},\nonumber\\
& \ \ \mathbf{r}\neq \mathbf{0}.
\end{align}
where $\mathbf{V}_M$ is defined in Remark \ref{rm:null}. We refer to $\mathbf{r}_{M}$ or the corresponding $(P,Q)$ pair as the Max-SNR design for an spectral null order of order $M$. Equivalently, the Max-SNR design can be obtained by solving 
\begin{align}\label{eq:maxsnr0}
\mathrm{minimize\ } & \ \ \|\mathbf{r}\|_{2}^{2}\nonumber\\
\mathrm{subject \ to} & \ \ \|\mathbf{r}\|_{1}=1,\nonumber\\ 
& \ \ \mathbf{V}_M\mathbf{r}=\mathbf{0}.
\end{align}
Given any element of $\mathbf{r}$, say $r_n$, we can always find unique numbers $s_n,t_n\ge 0$, such that $r_n=s_n-t_n$ and $|r_n|=s_n+t_n$. Let $\mathbf{s}=[s_0,\ldots,s_{N-1}]^T$ and $\mathbf{t}=[t_0,\ldots,t_{N-1}]^T$ be vectors of such numbers for the elements of $\mathbf{r}$. Then, we can write the optimization problem in \eqref{eq:maxsnr0} as
\begin{align}\label{eq:maxsnr}
\mathrm{minimize\ } & \ \ \begin{bmatrix}\mathbf{s}^T \ & \  \mathbf{t}^T\end{bmatrix}\begin{bmatrix}\hfill 1 \ & \ \hfill -1\\ \hfill -1 \ & \ \hfill 1\end{bmatrix}\begin{bmatrix}\mathbf{s} \\ \mathbf{t}\end{bmatrix}\nonumber\\  
\mathrm{subject \ to} & \ \  \begin{bmatrix} \mathbf{1}^T \ & \ \mathbf{1}^T\\ \mathbf{V}_M \ & \ \mathbf{V}_M\end{bmatrix}\begin{bmatrix} \mathbf{s} \\ \mathbf{t}\end{bmatrix}=\begin{bmatrix} 1 \hfill\\ \mathbf{0}\end{bmatrix},\nonumber\\
& \ \ \begin{bmatrix} \mathbf{s} \\ \mathbf{t} \end{bmatrix}\ge \mathbf{0}.
\end{align}
where $\mathbf{1}$ and $\mathbf{0}$, respectively, denote all one and all zero vectors of appropriate sizes, and $\ge$ in the last line is element wise. This is a convex optimization problem and can be solved by satisfying the Karush-Kuhn-Tucker (KKT) conditions (see, e.g, \cite{Chong-book}). Once the optimal $\mathbf{s}$ and $\mathbf{t}$ are found, we form $\mathbf{r_M}$ and then find $P$ and $Q$ from the signs and moduli of the elements of $\mathbf{r}_M$ as in \eqref{eq:pn}. 

\begin{example}\label{ex:maxsnr}
For $N=16$ and $M=8$, solving \eqref{eq:maxsnr} yields the following Max-SNR design: 
\begin{align}
P &= [ 0   \ 1 \    0 \  1 \   1 \     0 \    0 \  1 \   1  \   0 \    0 \   1 \   1 \    0 \   1 \    0],\nonumber\\  
Q &= 10^{-2}\times [0.69 \  4.29 \  9.48 \  6.23 \  6.56 \   7.70  \  7.13 \   7.92 \nonumber\\
& \quad\quad\quad\quad 7.92  \  7.13  \  7.70  \  6.56  \  6.23  \  9.48  \  4.29  \  0.69].\nonumber
\end{align}
\end{example}

\begin{remark}\label{rm:snrgain}
Let $S_Q(\theta)$ denote the spectrum of the Q sequence: 
\begin{equation}
S_{Q(\theta)}=\sum\limits_{n=0}^{N-1} q_n e^{jn\theta}.
\end{equation}
The {\em effective bandwidth} $\beta_Q$ (see, e.g., \cite{Stoica}) of this sequence is given by
\begin{align}
\beta_{Q} &=\frac{\frac{1}{2\pi}\int_{-\pi}^{\pi}\mathcal{S}_{Q}(\theta)d\theta}{\mathcal{S}_{Q}(0)}\nonumber\\
&=\frac{\sum_{n=0}^{N-1}q_n^2}{\bigl(\sum_{n=0}^{N-1}q_n\bigr)^2}\nonumber\\
&= \frac{\Vert\mathbf{q}\Vert_2^2}{\Vert\mathbf{q}\Vert_1^2}.
\end{align}
Therefore, we obtain  
\begin{equation}
\rho=\frac{L\sigma_b^2}{N_0\beta_{Q}},
\end{equation}
indicating that output SNR is inversely proportional to the effective bandwidth of $Q$. We can think of $1/\beta_Q$ as the SNR gain due to processing $N$ pulses together, because $L\sigma_b^2/N_0$ is the SNR from processing a single waveform.  
\end{remark}

\begin{remark}
In the case of a noisy radar return, detection of targets is inhibited
by spillage of energy coming from nearby bins as well as ``in-bin'' (measurement)
noise. Consider the case of two point targets, with equal powers $\sigma_b^2$, that are 
$\theta=\nu T$ apart in Doppler. The ratio
\begin{equation}\label{eq:rpq}\begin{split}
\kappa(\theta) &=\frac{\sigma_b^2|\chi_{P,Q}(0,0)|^2}{\sigma_b^2\max\limits_{k\neq 0}|\chi_{P,Q}(k,\theta)|^2+\eta} \\
& = \biggl(\gamma(\theta)^{-1}+\rho^{-1}\biggr)^{-1}
\end{split}
\end{equation}
characterizes the separability of these two targets in the noisy
environment. 
\end{remark}

\section{An Illustration}\label{sec:illust}

We consider a radar scene that contains three strong reflectors of equal
amplitude at different ranges and two weak targets (each $30\dB$ weaker)
with  small Doppler frequencies relative to the stronger
reflectors. The baseband waveforms $x(t)$ and $y(t)$ are generated by phase coding a
a raised cosine pulse $\Omega(t)$ with a pair length $L=64$ Golay complementary sequences, constructed as in Remark \ref{rm:golayconst}.\footnote{The specific choice of the Golay complementary pair of a given length does not have any noticeable effect on the results.} The chip interval is $T_c=100$ \textit{n}sec, and the PRI is $T=50$ $\mu$sec. 

Figure~1 illustrates the annihilation of range-sidelobes around the
zero-Doppler axis for three different length $N=16$ $(P,Q)$ designs and
compares their delay-Doppler responses with that of a conventional
design: an alternating transmission of Golay complementary waveforms followed by a matched filter at the
receiver. The horizontal and vertical axes depict Doppler and
delay, respectively. Color bar values are in $\dB$. All four transmit pulse trains have the same total energy.  

In the conventional design, shown in Figure~1(a), the weak targets
are almost completely masked by the range sidelobes of the stronger
reflectors, whereas the PTM design, shown in Figure~1(b), clears the
range sidelobes inside a narrow Doppler interval around the
zero-Doppler axis. The order of the spectral null for range sidelobe
suppression in this case is $M=\left(\log_2 N \right)-1=3$. This
brings the range sidelobes below $-80\dB$ inside the $[-0.1,-0.1]$rad
Doppler interval and enables detection of the weak targets. 

If the difference in the Doppler frequencies of the weak and strong reflectors is larger,
a higher order null is needed  to annihilate the range sidelobes
inside a wider Doppler band. Figure~1(c) shows that the Binomial
design (of length $N=16$) can expand the cleared (below $-80\dB$) region
to $[-1,-1]$rad by creating a null of order $M=N-2=14$ at zero
Doppler. 

Figure~1(d) shows the delay-Doppler response of a $(P,Q)$
design that has the largest SNR among all $(P,Q)$ pairs (Max-SNR design) that achieve
an $(M=8)$th order spectral null at zero Doppler (see Example \ref{ex:maxsnr}). The cleared (below $-80$dB) region in this case is
the $[-0.5,0.5]$rad Doppler interval. 

\begin{figure}\label{f:PTM}
\begin{center}
\begin{tabular}{cc}
\subfigure[]
{\includegraphics[width=3.2in,height=1.93in]{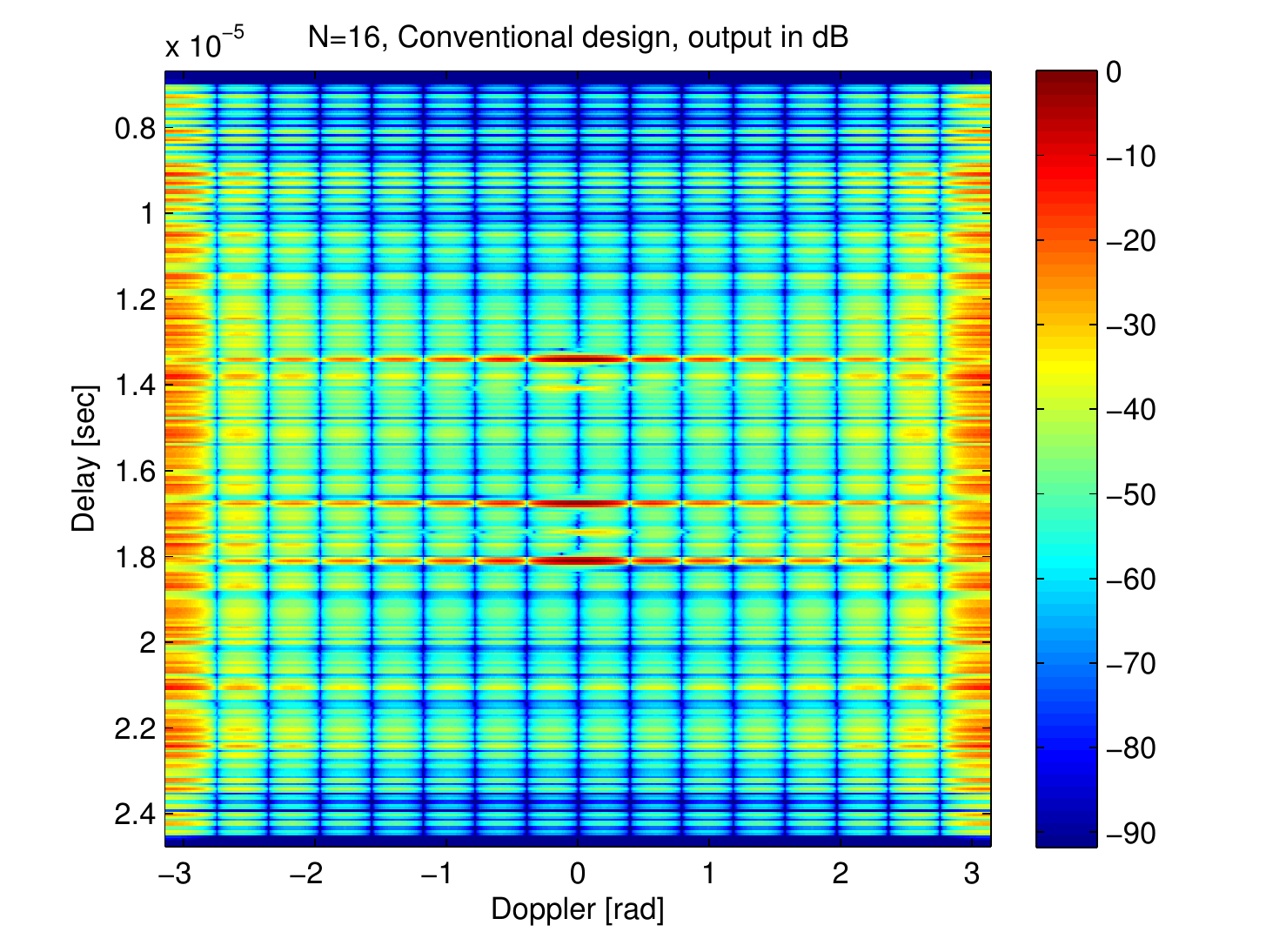}}\\
\subfigure[]
{\includegraphics[width=3.2in,height=1.93in]{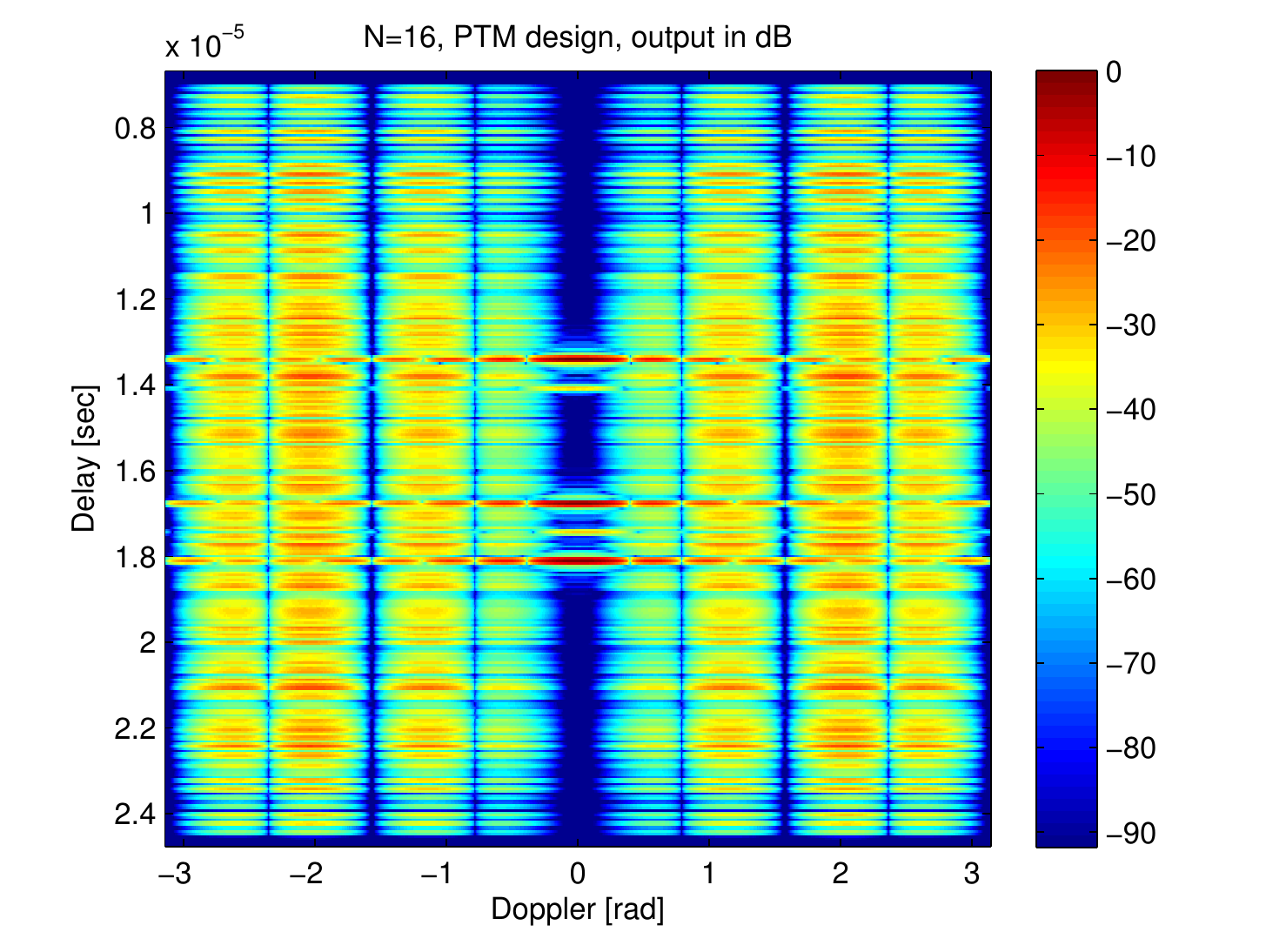}}\\
\subfigure[]
{\includegraphics[width=3.2in,height=1.93in]{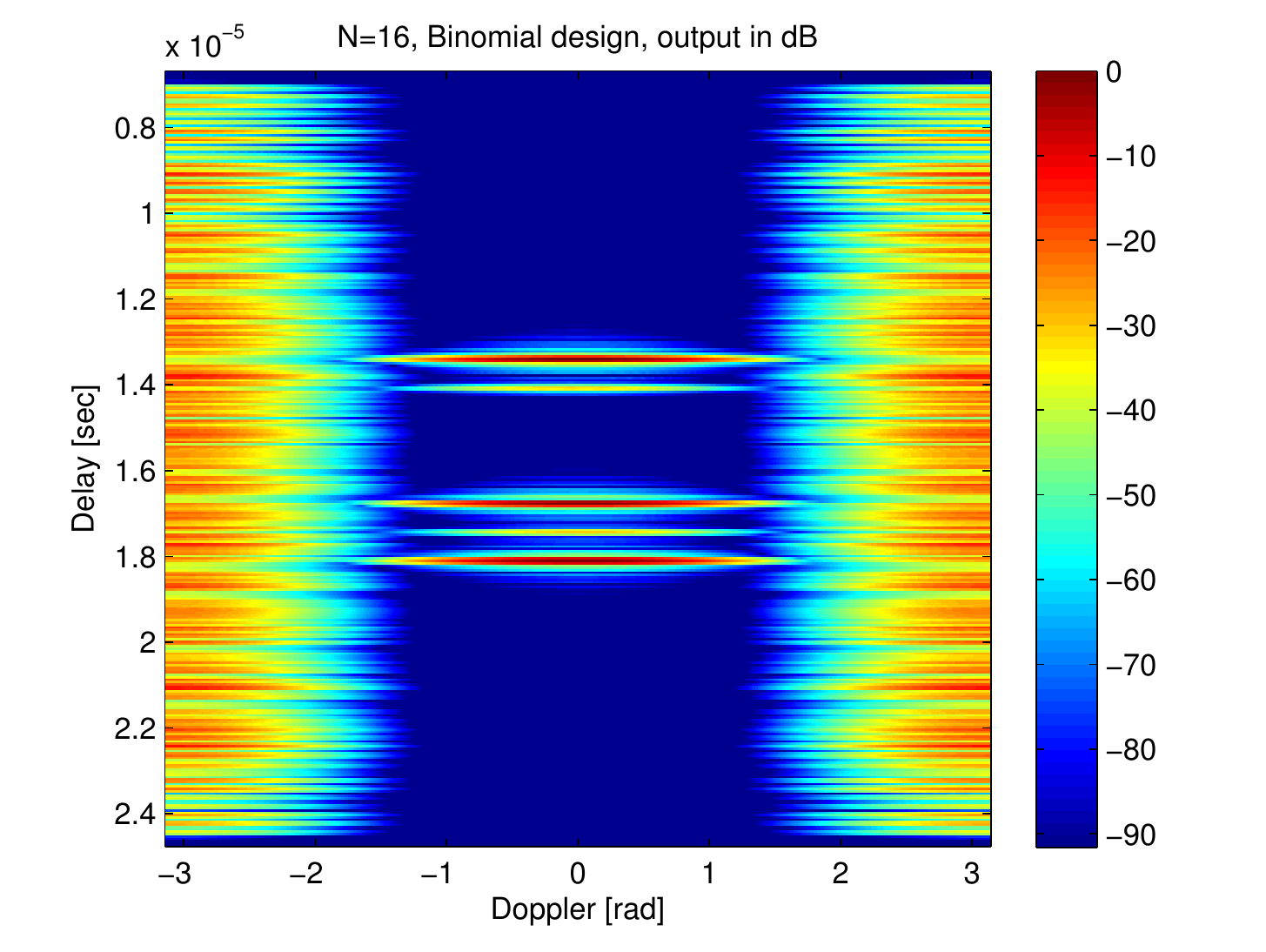}}\\
\subfigure[]
{\includegraphics[width=3.2in,height=1.93in]{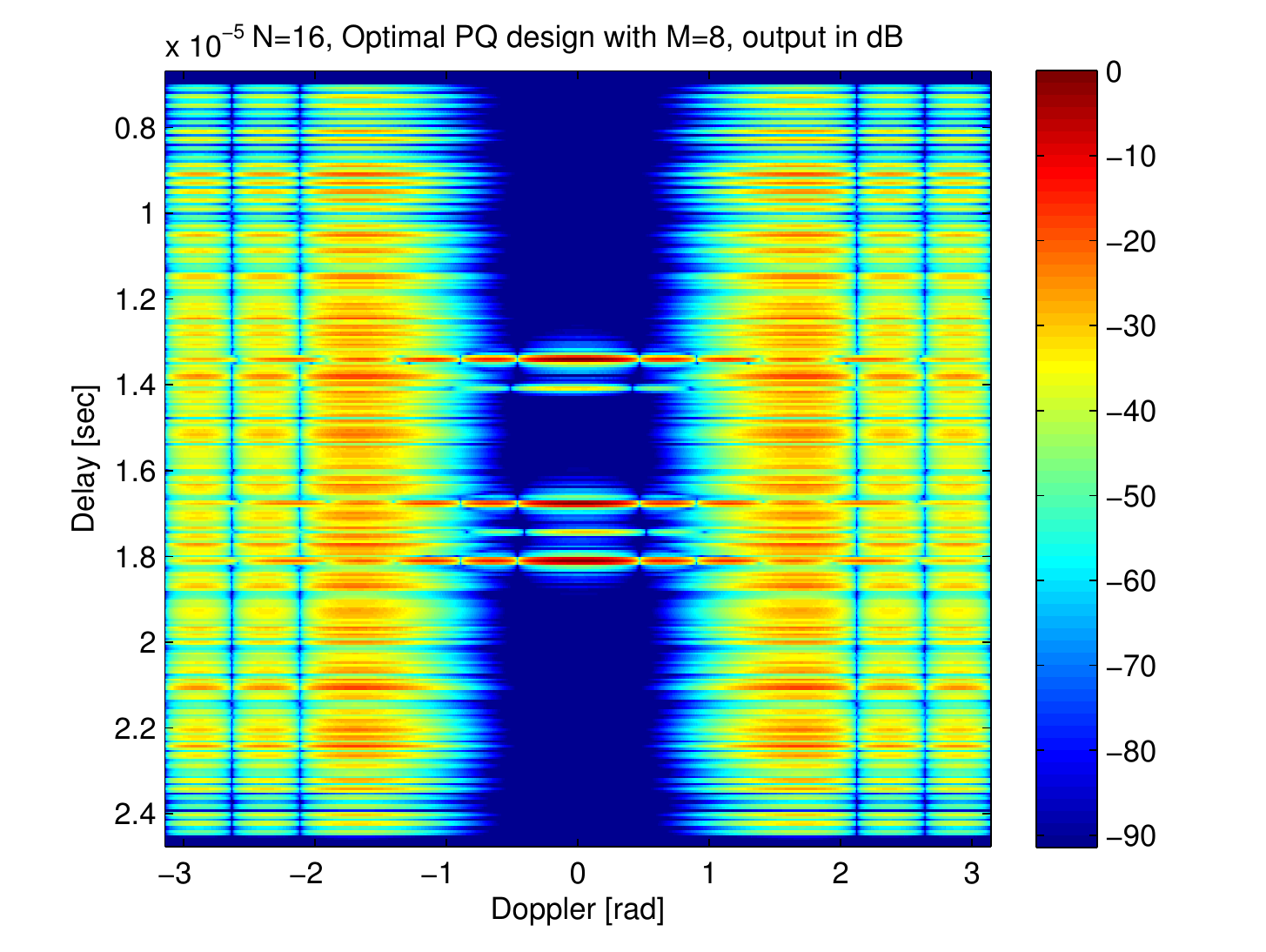}}\\
\end{tabular}
\end{center}
\caption{Comparison of output delay-Doppler maps for different $(P,Q)$
  designs: (a) conventional design, (c) PTM design, (e) Binomial
  design, and (g) max-SNR design with an $8$-th order null. The scene contains three
  strong (equal amplitude) stationary reflectors at different ranges
  and two weak slow moving targets ($30\dB$ weaker).}
\end{figure}

Table~\ref{tab:1} compares the three designs in terms of the null
order and the SNR gain $1/\beta_{Q}$ (see Remark \ref{rm:snrgain}), and shows that, by joint design of the $P$
and $Q$ sequences, a null of relatively high order can be achieved
without considerable reduction in SNR compared to a conventional matched filter
design.
\begin{table}[h]
\caption{Null order \& SNR for different designs}\label{tab:1}
\begin{center}
\begin{tabular}{|c|c|c|}
  \hline
  $(P,Q)$ design & Null order & SNR gain\\ \hline
  Conventional & 0 & 16\\ \hline
  PTM  & 3 & 16\\ \hline
  Max-SNR with $M=8$ & 8 & 13.76\\ \hline
  Binomial & 14 & 6.92\\ \hline
\end{tabular}
\end{center}
\end{table}

\section{${(P,Q)}$ Pulse Trains for Larger Sets of Complementary Waveforms}\label{sc:multi}

So far, we have studied the design of ${(P,Q)}$ pulse trains for a
library consisting of only two complementary waveforms. We now extend
this construction to larger collections. Here we have  a set
of \textit{$D$-complementary} length-$L$ sequences
$\mathcal{X}=\{x_0,x_1,...,x_{D-1}\}$, where the  autocorrelations
$C_{z_d}[k]$ of the $z_d$ sequences satisfy
\begin{equation}\label{eq:golay_D}
\sum_{d=0}^{D-1}C_{z_d}[k]=DL\delta[k].
\end{equation}
Normally $D$ is a power of $2$,  and no pairwise complementarity is
assumed.  For example, for $D=4$, we can choose $x_0$, $x_1$, $x_2$,
and $x_3$ to form a Golay complementary quad, satisfying
(\ref{eq:golay_D}), without making $x_i$, $x_j$, $i\neq j$ Golay
complementary pairs. The reader is referred to \cite{Tseng-IT72} for the 
construction of Golay quads and larger sets of complementary sequences.

We will assemble pulse trains for transmission and filtering according
to sequences $P=\{p_n\}_{n=0}^{N-1}$ and $Q=\{q_n\}_{n=0}^{N-1}$,
in analogous fashion to the $2$-complementary case covered in earlier sections. To allow for
indexing of $D$ different waveforms,
we take $P$ to be a $D$-ary
sequence; that is,  defined over the alphabet $\mathcal{D}=\{0,1,...,D-1\}$;
At the $n$th PRI of the $P$ pulse train the waveform $x_d(t)$,
phase coded by $x_d[\ell]$ as in \eqref{eq:sum_of_chips}, is transmitted if $p_n=d$.
The ordering of the waveforms in the $Q$ pulse train is the same as that in the $P$
pulse train, but the $n$th waveform is weighted by $q_n$ as before. 

Let $\omega=e^{j2\pi /D}$. Note that, for each $d$ from $0$ to $D-1$, we have 
\begin{equation}
\frac{1}{D}\sum_{r=0}^{D-1}\omega^{r(p_n-d)}=\begin{cases}1, &\ p_{n}=d\\
0, & \ p_{n}\neq d \end{cases}.
\end{equation}
Then, the $P$ and $Q$ pulse trains, denoted again (with some abuse of notation) by $z_P(t)$ and $z_Q(t)$ can be expressed as 
 \begin{equation}\label{eq:zpv}
 z_P(t)=\sum\limits_{n=0}^{n-1}\left(\sum\limits_{d=0}^{D-1}\left(\frac{1}{D}\sum\limits_{r=0}^{D-1}\omega^{r(p_n-d)}\right)x_d(t-nT)\right)
 \end{equation}
and
 \begin{equation}\label{eq:zqv}
z_Q(t)=\sum\limits_{n=0}^{n-1}q_n\left(\sum\limits_{d=0}^{D-1}\left(\frac{1}{D}\sum\limits_{r=0}^{D-1}\omega^{r(p_n-d)}\right)x_d(t-nT)\right).
 \end{equation}
 
Following similar steps as those taken in deriving \eqref{eq:ptmamb}, we can write the discretized (in delay) cross-ambiguity function between $z_P(t)$ and $z_Q(t)$ as 
\begin{equation}
    \label{eq:abdfkjbakjdb}
    \begin{split}
    &\chi_{P,Q}(k,\theta)\\
    &=\frac{1}{D}\sum_{d=0}^{D-1}C_{z_d}[k]\left(\sum_{n=0}^{N-1}q_{n}e^{jn\theta}
    \sum_{r=0}^{D-1}\omega^{r(p_{n}-d)}\right)\\
    &=\frac{1}{D}\sum_{r=0}^{D-1}\biggl(\sum_{n=0}^{N-1}
    \omega^{rp_{n}}q_{n}e^{jn\theta}\biggr)\biggl(\sum_{d=0}^{D-1}\omega^{-rd}C_{z_d}[k]\biggr)
    \\
&=\frac{1}{D}\biggl(DL\delta(k)\sum_{n=0}^{N-1}q_ne^{jn\theta}+\sum_{r=1}^{D-1}S_{P,Q,r}(\theta)\Delta_r\biggr),
\end{split}
\end{equation}
where
\begin{equation}
S_{P,Q,r}(\theta)=\sum_{n=0}^{N-1}\omega^{rp_{n}}q_{n}e^{jn\theta}
\end{equation}
and
\begin{equation}
\Delta_r=\sum_{d=1}^{D-1}\omega^{-rd}C_{z_d}[k].
\end{equation}

The first term on the right hand side of \eqref{eq:abdfkjbakjdb} is an impulse in delay and does not have range sidelobes. The second term has range sidelobes because of $\Delta_r$. To suppress the range sidelobes in (\ref{eq:abdfkjbakjdb}), it suffices to suppress the spectra $S_{P,Q,r}(\theta)$,  for $r=1,\ldots, D-1$. We note that for $D=2$, the term $S_{P,Q,1}(\theta)$ is the spectrum $S_{P,Q}(\theta)$ analyzed earlier.  

Consider the complex-valued functions on $[0,2\pi]\times \{1,\cdots, \omega^{D-1}\}$ of the form
\begin{equation}\label{eq:zetaform}
f(\theta, \zeta) = \sum_{n=0}^{N-1} q_n \zeta^{p_n} e^{j n\theta},
\end{equation}
where $q_n \geq 0$
and $p_n \in \{0,\cdots, D-1\}$.
The set of functions of this form is  denoted by $W_N(\omega)$. We note that the spectra $S_{PQ,r}(\theta)$, $r=1,2,\ldots,D-1$, are all elements of $W_N(\omega)$.

Higher order polynomials of the form \eqref{eq:zetaform} (longer
sequences) can be constructed as follows. For $i=1, 2$,
take $f_i \in W_{N_i}(\omega)$ with
\begin{equation}
f_i(\theta, \zeta_i) = \sum_{n=0}^{N_i-1} q_n^{(i)} \zeta_i^{p_n^{(i)}} e^{j n\theta},
\end{equation}
then 
\begin{multline*}
f_2(\theta, \zeta_2)f_1(N_1\theta, \zeta_1) = \\ \sum_{n=0}^{N_1-1}\sum_{\ell=0}^{N_2-1} q_\ell^{(2)}q_n^{(1)} \zeta_2^{p_\ell^{(2)}}\zeta_1^{p_n^{(1)}} e^{j (N_1\ell + n)\theta}\in W_{N_1N_2}(\omega),
\end{multline*}  
because $\zeta_2^{p_\ell^{(2)}}\zeta_1^{p_n^{(1)}} = \omega^{ap_\ell^{(1)}+bp_n^{(2)}}$, for some fixed $a,b \in \{0, \cdots, D-1\}$.
 
The following theorem presents a construction of a $(P,Q)$ pair of sequences
that render a higher order null for each of
$S_{P,Q,1}(\theta),...,S_{P,Q,D-1}(\theta)$:
\begin{theorem}\label{thm:Horder}
Fix a set of $D=2^m$ complementary sequences and set $\omega = e^{2\pi j/D}$. Suppose
that $f_1, \cdots, f_m$ are functions in $W_N(\omega)$, each with the property that  
$f_k(\cdot, -1)$ has an $M$th order spectral null at $\theta=0$. Then each
of the functions on $W_{N^m}$:
\begin{equation}\label{eq:SPQr}
S_{P,Q,r}(\theta) = \prod_{k=1}^m f_k(N^{k-1}\theta,
\omega^{2^{k-1}r}),  
\end{equation}
for $r=1,2,\ldots, D-1$, has an $M$-th order 
spectral null at $\theta=0$. 
\end{theorem}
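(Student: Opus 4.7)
The plan is to identify, for each $r \in \{1,\ldots, D-1\}$, a single factor of the product defining $S_{P,Q,r}(\theta)$ that already carries an $M$-th order null at $\theta = 0$, and then invoke the Leibniz rule so that the null propagates to the full product. The whole argument reduces to a short number-theoretic bookkeeping step plus a routine calculus check.

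First I would exploit the fact that $D = 2^m$ and $\omega = e^{2\pi j/D}$, so $\omega^{2^{m-1}} = -1$. Writing each $r \in \{1,\ldots,D-1\}$ uniquely as $r = 2^s t$ with $t$ odd and $0 \le s \le m-1$ (this is possible because $r$ is not divisible by $2^m$), I would pick $k = m - s \in \{1,\ldots,m\}$. A direct computation gives
\[
2^{k-1} r \;=\; 2^{m-s-1}\cdot 2^{s}\cdot t \;=\; 2^{m-1} t,
\]
whence $\omega^{2^{k-1}r} = (\omega^{2^{m-1}})^t = (-1)^t = -1$ since $t$ is odd. Therefore the $k$-th factor in the product \eqref{eq:SPQr} is exactly $f_k(N^{k-1}\theta, -1)$, which by hypothesis has an $M$-th order null at $\theta = 0$; an application of the chain rule confirms that replacing $\theta$ by $N^{k-1}\theta$ preserves the order of the null at the origin.

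Next I would use the general Leibniz rule to transfer this null to the whole product. Writing $g(\theta) = f_k(N^{k-1}\theta,-1)$ and letting $h(\theta)$ denote the product of the remaining $m-1$ factors (each a trigonometric polynomial in $\theta$ and hence smooth), one has
\[
\frac{d^{\ell}}{d\theta^{\ell}}\bigl(g\, h\bigr)(0)
\;=\;\sum_{j=0}^{\ell}\binom{\ell}{j} g^{(j)}(0)\, h^{(\ell-j)}(0) \;=\; 0
\]
for every $\ell = 0,1,\ldots,M$, because $g^{(j)}(0) = 0$ for all $j \le M$. This is precisely the statement that $S_{P,Q,r}(\theta)$ has an $M$-th order spectral null at $\theta = 0$.

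The only genuine obstacle is the small arithmetic observation that for every nonzero $r \in \{0,1,\ldots, D-1\}$ there is some $k \in \{1,\ldots,m\}$ with $\omega^{2^{k-1}r} = -1$; once that indexing lemma is in hand the rest is bookkeeping, and no deeper property of the $f_k$ (beyond smoothness and the assumed null of $f_k(\cdot,-1)$) is needed.
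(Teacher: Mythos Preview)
Your proof is correct and follows essentially the same route as the paper: both arguments write $r$ as $2^{s}t$ with $t$ odd (the paper's parametrization $r=2^{m-k'}\ell$ is the same after relabeling $k'=m-s$), pick the corresponding index $k$ so that $\omega^{2^{k-1}r}=(-1)^{t}=-1$, and conclude that this factor carries the $M$th-order null. You spell out the chain-rule and Leibniz-rule verifications that the paper leaves implicit, but the underlying idea is identical.
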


\begin{proof}
  First note that, for $k=1,\cdots, m$, $f_k(N^{k-1}\theta, -1)$
  has an $M$th  order spectral null at $\theta=0$,
  because $f_k(\theta,-1)$ does. Now, for $1\leq r \leq 2^m-1$,
  by  prime factorization, $r = 2^{m-k'}\ell$,
  for some odd positive integer $\ell$,
  and some $1\leq k' \leq m$,
  in which case $\omega^{2^{k'-1}r}=-1$.
  Thus, for each $1\leq r \leq 2^m-1$,
  $S_{P,Q,r}(\theta)$
  has a factor with an $M$th
  order spectral null at $\theta=0$ and the theorem follows.
\end{proof}

This theorem provides a method for constructing pulse trains with
$M$th order nulls at $\theta=0$ from functions $f\in W_N(\omega)$
with the property that $f_k(\cdot, -1)$ has an $M$th
order spectral null at $\theta=0$. We need a method for constructing
the latter. We know from Theorem~\ref{thm:2a} that $\{g_n(\theta)=(1-e^{j\theta})^n |\; n=M+1, \ldots, N-1\}$ provides a basis for the subspace $T_M$ of functions having an $M$-th order spectral null at $\theta=0$. Consider the function 
\begin{equation}
f_k(\theta,\zeta)=(1+\zeta e^{j\theta})^n h(\theta,\zeta), \ n=M+1,M+2,\ldots,N-1, 
\end{equation}
where $h(\theta,\zeta) \in W_{N_k-n}(\omega)$. This function is in $W_{N_k}(\omega)$ by construction, and it has the property that $f_k(\cdot,-1)$ has an $M$th order null at $\theta=0$ because $g_n(\theta)=(1-e^{j\theta})^n$ is a factor of $f_k(\cdot,-1)$.

Thus, elements of $W_N(\omega)$ are of the form
\begin{equation}
 f(\theta, \zeta) = \sum_{n=M+1}^{N-1} a_n g_n(\theta, \zeta)
\end{equation}
for $a_n \geq 0$, where 
\begin{equation}
g_n(\theta, \zeta) = (1+\zeta e^{i\theta})^n
\end{equation}
all have the required property that $f(\theta, -1)$ has an $M$-th
order spectral null at $\theta=0$. 
Alternatively, we note that taking any element of $T_M$, 
\begin{equation}
f(\theta) = \sum_{n=0}^{N-1} q_n (-1)^{p_n} e^{jn\theta}, 
\end{equation}
with $p_n\in \{0,1\}$, and replacing $(-1)$ by $\zeta$ to obtain
\begin{equation}\label{eq:firstorder}
f(\theta, \zeta) = \left(\sum_{\substack{n=0\\p_n=0}}^{N-1} q_n e^{jn\theta}\right) + \left(\sum_{\substack{n=0\\p_n=1}}^{N-1} q_n e^{jn\theta}\right) \zeta 
\end{equation}
also gives an element of $W_N(\omega)$, where $f(\theta, -1)$ has  an $M$-th order spectral null at $\theta=0$.

\begin{example}\label{ex:akjbd}
	Given a set of $D=4$ complementary sequences  $\{z_0,z_1,z_2,z_3\}$
	(called a Golay complementary quad), take $f_k(\theta,\zeta) = g_3(\theta,\zeta)$ for $k=1, 2$.  We
	know  that $g_3(\theta,-1)$ creates a second-order null of range sidelobes
	at $\theta=0$ for a complementry pair. Now 
	generate the length-16 sequences $P$ and $Q$  according to Theorem~\ref{thm:Horder}, that is, 
	\begin{align}\label{eq:ex3}
		S_{P,Q,r} &= g_3(\theta,j^r)\,g_3(4\theta, (-1)^r)\nonumber\\ 
		&= (1+j^r e^{j\theta})^3 (1+(-1)^r e^{j\theta})^3 
	\end{align}
	for $r=0,\cdots, 3$, which gives
	\begin{align*}
	P&=\{p_{n}\}_{n=0}^{15} = 0\ 1\ 0\ 1\ 2\ 3\ 2\ 3\ 0\ 1\ 0\ 1\ 2\ 3\ 2\
	3,\\
	Q&=\{q_{n}\}_{n=0}^{15} = 1\ 3\ 3\ 1\ 3\ 9\ 9\ 3\ 3\ 9\ 9\ 3\ 1\ 3\ 3\ 1.
	\end{align*}	
	The transmit and receive waveforms ($\mathbf s$ and $\mathbf w$,
	respectively)  can be vectorized as
	\begin{align*}
	\mathbf{z}_{P}&=[z_0\ z_1\ z_0\ z_1\ z_2\ z_3\ z_2\ z_3\ z_0\ z_1\ z_0\ z_1\ z_2\ z_3\ z_2\ z_3],\\
	\mathbf{z}_{Q}&=[z_0\ 3z_1\ 3z_0\ z_1\ 3z_2\ 9z_3\ 9z_2\ 3z_3\ 3z_0\
                9z_1\ 9z_0\ 3z_1\ \\&  \qquad \qquad z_2\  3z_3\ 3z_2\ z_3],
	\end{align*}
	where in the corresponding continuous-time waveforms $z_{P}(t)$ and $z_Q(t)$ consecutive elements in the above vectors are separated by a PRI $T$.  	
	Writing out \eqref{eq:ex3} in detail, we have 
	\begin{equation*}\begin{split}
	S_{P,Q,1}(\theta)&=(1+je^{j\theta})^3\underline{(1-e^{j4\theta})^3},\\
	S_{P,Q,2}(\theta)&=\underline{(1-\phantom{j}e^{j\theta})^3}(1+e^{j4\theta})^3,\\
	S_{P,Q,3}(\theta)&=(1-je^{j\theta})^3\underline{(1-e^{j4\theta})^3},
	\end{split}\end{equation*}
  Each of $S_{P,Q,r}(\theta)$,
  ($r=1,2,3)$
  has a second-order null at $\theta=0$,
  resulting from the underlined factor.  Finally we note that using
  $f_k(\theta, \zeta) = 1+3je^{j\theta}+3e^{2j\theta}+je^{3j\theta}$,
  according to \eqref{eq:firstorder}, gives precisely the same transmit
  and receive sequences.
\end{example}

\section{Extension to MIMO Radar}\label{sc:MIMO}

We now extend the construction of $(P,Q)$ pairs to MIMO radar. We consider a MIMO radar with an array of $2^K$, $K\ge 1$, transceiver elements and construct pulse trains of complementary waveform vectors and receive filter banks for which the cross-ambiguity matrix is essentially free of range sidelobes inside an interval around the zero-Doppler axis.

\textit{Definition 5:} Complementary vector sets \cite{Tseng-IT72}. A set of $D$ sequence-valued vectors $\bx_d$, $d=0,\ldots, D-1$, each composed of $D$ length-$L$ unimodular sequences, is called complementary if 
\begin{equation}
\sum\limits_{d=0}^{D-1}\bC_{\bx_d}[k]=D L\bI_{D}\delta[k]
\end{equation} 
where $\bI_{D}$ is the $D\times D$ identity matrix and 
\begin{equation}
\bC_{\bx_d}[k]=\sum\limits_{\ell=-(L-1)}^{L-1} \bx_d[\ell]\bx_d[\ell-k]^H, \ d=0,\ldots, D'-1, 
\end{equation}
is the autocorrelation matrix of $\bx_d[\ell]$ at lag $k$. 

It has been shown in \cite{Tseng-IT72} that such complementary sets can be constructed when $D=2^K$ for $K> 1$ in a recursive fashion. The reader is referred to \cite{Tseng-IT72} for details of such constructions. We discuss a special case of the construction of complementary sets in an example shortly.

\begin{remark}
Let $\bS_D$ be a $D\times D$ sequence-valued matrix whose columns $\bx_d$, $d=0,\ldots, D-1$ form a complementary set. Then, it is straightforward to show that $\bS_D$ is {\em paraunitary}, that is, 
\begin{equation}
\sum\limits_{\ell=0}^{L-1}\bS_D[\ell]\bS_D[\ell-k]^H=\sum\limits_{d=0}^{D-1} \bC_{\bx_d}[k]=DL\bI_{D}\delta[k].
\end{equation}
Conversely, the columns of a paraunitary matrix form a complementary set. This paraunitary property is the same as the one in the theory of paraunitary filter banks and quadrature mirror filters (see, e.g., \cite{Vet-book}), where sequences are thought of as Finite-Impulse-Response (FIR) filters. In fact, complementary sets are special cases of paraunitary filter banks, where the FIR tap coefficients are unimodular.       
\end{remark}

\begin{example}
Suppose $x$ and $y$ are length-$L$ Golay complementary sequences. Consider the matrix 
\begin{equation}\label{eq:S2}
\bS_{2}=\begin{bmatrix} x \ \hfill & \ \hfill -\tilde{y}\\ y \ \hfill & \ \hfill \tilde{x}\end{bmatrix}
\end{equation}
where $\tilde{x}$ and $\tilde{y}$  are reversed versions of $x$ and $y$, respectively, that is, $\tilde{x}[\ell]=x[L-1-\ell]^*$ and $\tilde{y}[\ell]=y[L-1-\ell]^*$, $\ell=0,\ldots,L-1$. Then, the columns of $\bS_{2}$ are complementary ($D=2$). This is because $\bS_{2}$ is paraunitary:
\begin{align}
\sum\limits_{\ell=0}^{L-1}\bS_{2}[\ell]\bS_2^H[\ell]&=\begin{bmatrix} C_x[k]+C_y[k] \ & \ C_{xy}[k]-C_{xy}[k] \\ C_{yx}[k]-C_{yx}[k] \ & \ C_x[k]+C_y[k]\end{bmatrix}\nonumber
\\&=2L\bI_2\delta[k].
\end{align}
Larger paraunitary matrices, or equivalently larger complementary sets can be constructed recursively. Let $\bS_{2^{k-1}}$ be a $2^{k-1}\times 2^{k-1}$ paraunitary matrix. Then  
\begin{equation}\label{eq:S2k}
\bS_{2^k}=\begin{bmatrix} \bS_{2^{k-1}} \ \hfill& \ \hfill \bS_{2^{k-1}} \\ \widetilde{\bS}_{2^{k-1}} \ \hfill & \ \hfill -\widetilde{\bS}_{2^{k-1}}\end{bmatrix}, 
\end{equation}
where $\widetilde{\bS}_{2^{k-1}}$ signifies the reversal of all of the sequences in the matrix $\bS_{2^{k-1}}$, is a $2^k \times 2^k$ paraunitary matrix. This can be easily verified by an induction argument.  
\end{example}

Now consider a paraunitary matrix $\bS_D$. Let $\bX$ denote a waveform-valued matrix whose $(m,n)$th element, $x_{m,n}(t)$, is obtained by phase coding the basic pulse shape $\Omega(t)$ with the $(m,n)$ element of $\bS_D$ for $m=0,\ldots,D-1$ and $n=0,\ldots, D-1$. Each column of $\bX$ is a waveform vector whose elements are transmitted simultaneously by the radar array. Let $\bx_{d}$ denote the $d$th column of $\bX$, that is, $\bx_d=[x_{0,d}(t),\ldots,x_{D-1,d}(t)]^T$, $d=0,\ldots, D-1$. Let $P=\{p_n\}_{n=0}^{N-1}$ be a $D$-ary sequence of length $N$ over the alphabet $\mathcal{D}=\{0,1,\ldots,D-1 \}$ and $Q=\{q_n\}_{n=0}^{N-1}$ be a nonnegative sequence of length $L$. 

We will assemble pulse train vectors $\bz_P(t)$ and $\bz_Q(t)$ for, respectively, transmission and filtering by selecting and weighting the waveform vectors $\bx_{d}$, $d=0,1,\ldots,D-1$ according to sequences $P$ and $Q$ in a similar fashion as the $D$-ary case covered in Section \ref{sc:multi}. At the $n$th PRI of $\bz_{P}(t)$, the waveform vector $\bx_d(t)$ is transmitted if $p_n=d$. The ordering of the waveform vector in $\bz_{Q}(t)$ is the same as that in $\bz_{P}(t)$, but the $n$th waveform vector is weighted by $q_n$. The expressions for  $\bz_P(t)$ and $\bz_Q(t)$ are similar to \eqref{eq:zpv} and \eqref{eq:zqv}, respectively, with waveforms vectors $\{\bx_d\}_{d=0}^{D-1}$ replacing the scalar waveforms  $\{x_d\}_{d=0}^{D-1}$.

Transmitting $\bz_{P} (t)$ and filtering the return by (correlation with) $\bz_{Q}(t)$ results in a matrix-valued point-spread function (in delay and Doppler) that is given by the cross-ambiguity matrix $\bChi_{PQ}(\tau,\nu)$ between $\bz_P (t)$ and $\bz_P (t)$:
\begin{equation}\label{eq:crossmat}
\bchi_{PQ}(\tau,\nu)=\int\limits_{-\infty}^{\infty} \bz_{P}(t)\bz_{Q}(t)^H e^{-j\nu t} dt.
\end{equation}
This cross-ambiguity matrix is $D\times D$. The $d$th diagonal element of $\bchi_{PQ}(\tau,\nu)$ is the ambiguity function of $x_d(t)$ an its $(n,m)$th ($n\neq m$) off-diagonal element is the cross-ambiguity function between $x_n(t)$ and $x_m(t)$.   


Following similar steps as those taken in deriving \eqref{eq:abdfkjbakjdb}, we can discretize the cross-ambiguity matrix $\bchi_{PQ}(\tau,\nu)$ in delay and express it as 
\begin{eqnarray}\label{eq:mchipqd2}
\bchi_{PQ}(k,\theta)&=\frac{1}{D}\biggl(DL\sum_{n=0}^{N-1}q_ne^{jn\theta}\biggr)\bI_D\delta[k]\nonumber\\
&+\frac{1}{D}\biggl(\sum_{r=1}^{D-1}S_{P,Q,r}(\theta)\bDelta_r\biggr),
\end{eqnarray}
where $\theta=\nu T$, 
\begin{equation}
S_{P,Q,r}(\theta)=\sum_{n=0}^{N-1}\omega^{rp_{n}}q_{n}e^{jn\theta}, 
\end{equation}
with $\omega=e^{j2\pi/D}$, and
\begin{equation}
\bDelta_r=\sum_{d=1}^{D-1}\omega^{-rd}\bC_{x_d}[k].
\end{equation}
This is the multi-channel (MIMO) counterpart of the cross-ambiguity function in \eqref{eq:abdfkjbakjdb}, where autocorrelations $C_{x_d}$ of complementary sequences $x_d$, $d=0,1,\ldots,D-1$ are replaced by the autocorrelation matrices $\bC_{\bx_d}$ of complementary vector sequences $\bx_d$, $d=0,1,\ldots,D-1$. 

The first term on the right hand side of \eqref{eq:mchipqd2} is an impulse in delay times a factor of identity, and is therefore free of range sidelobes. 
in delay and does not have range sidelobes. The second term has range sidelobes because of $\bDelta_r$. But the size of the entries of $\bDelta_r$ are all controlled by the spectrum $S_{P,Q,r}(\theta)$, which we studied in Section \ref{sc:multi}. The entries of $\bDelta_r$, and subsequently their weighted sum,  can be suppressed by creating high-order spectral nulls in $S_{P,Q,r}(\theta)$, $r=1,2,\ldots, D-1$ as stated in Theorem \ref{thm:Horder} . Thus, by selecting $P$ and $Q$ sequences according to  Theorem \ref{thm:Horder}, we can construct a cross-ambiguity matrix that is a factor times the identity at zero delay and vanishes at all nonzero delays in an interval around the zero Doppler axis.  

\begin{remark}
In the special case of $D=2$, the design of $P$ and $Q$ sequences follow the discussions of Section \ref{sc:PQbinary}. 
\end{remark}

\begin{example}\label{ex:2by2}
Let us consider the simplest MIMO case, where we have a MIMO radar with $D=2$ collocated transceivers. The paraunitary matrix used in phase coding in this case is $\bS_2$, given in \eqref{eq:S2}. In this case, we have two complementary waveform vectors $\bx_1(t)$ and $\bx_2(t)$, and the control of range sidelobes is similar to that in Section \ref{sc:PQbinary} and the three examples considered there are applicable here as well. Here the cross-ambiguity matrix $\bchi_{PQ}(k,\theta)$ is two-by-two. Figures~\ref{fig:2by2}(a)-(d) show the magnitude of the first diagonal element of the cross-ambiguity matrix for (a) the conventional design, (b) the PTM design, (c) the binomial design, and (d) the max-SNR design.  As can be seen, the ambiguities corresponding to the PTM, the binomial, and the max SNR design are all essentially delta functions in delay (range) in a Doppler interval around the zero Doppler axis. The plots for the second diagonal elements are identical and are not shown separately. Figures~\ref{fig:2by2}(e)-(h) show the magnitude of the first off-diagonal element of the cross-ambiguity matrix for (e) the conventional design, (f) the PTM design, (g) the binomial design, and (h) the max-SNR design. As can be seen, these cross-ambiguity functions corresponding to the PTM, the binomial, and the max-SNR designs all vanish in an interval around the zero Doppler axis. The magnitude of the second off-diagonal element is identical to that of the first off-diagonal elements, because of the cross-ambiguity matrix in \eqref{eq:crossmat} has Hermitian symmetry, because of the ordering of Golay waveforms in $\bz_P$ and $\bz_Q$ is the same.         
\end{example}

\begin{figure*}[tp]
\begin{center}
\begin{tabular}{cccc}
\subfigure[]{\includegraphics[width=.22\textwidth]{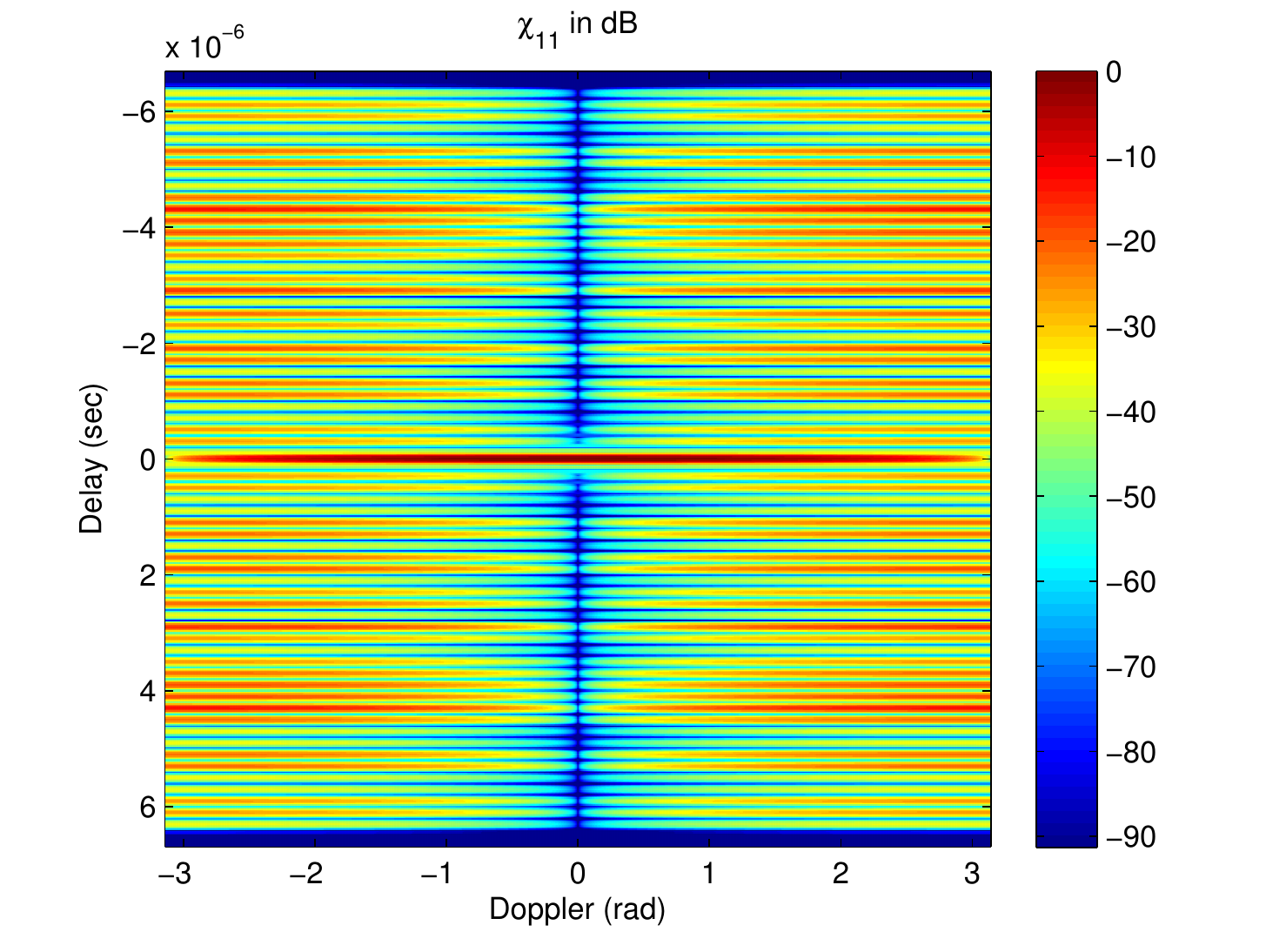}} & 
\subfigure[]{\includegraphics[width=.22\textwidth]{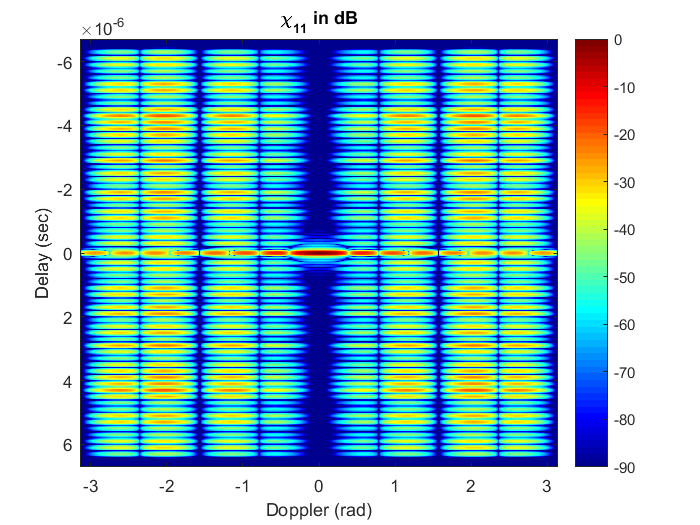}}&
\subfigure[]{\includegraphics[width=.22\textwidth]{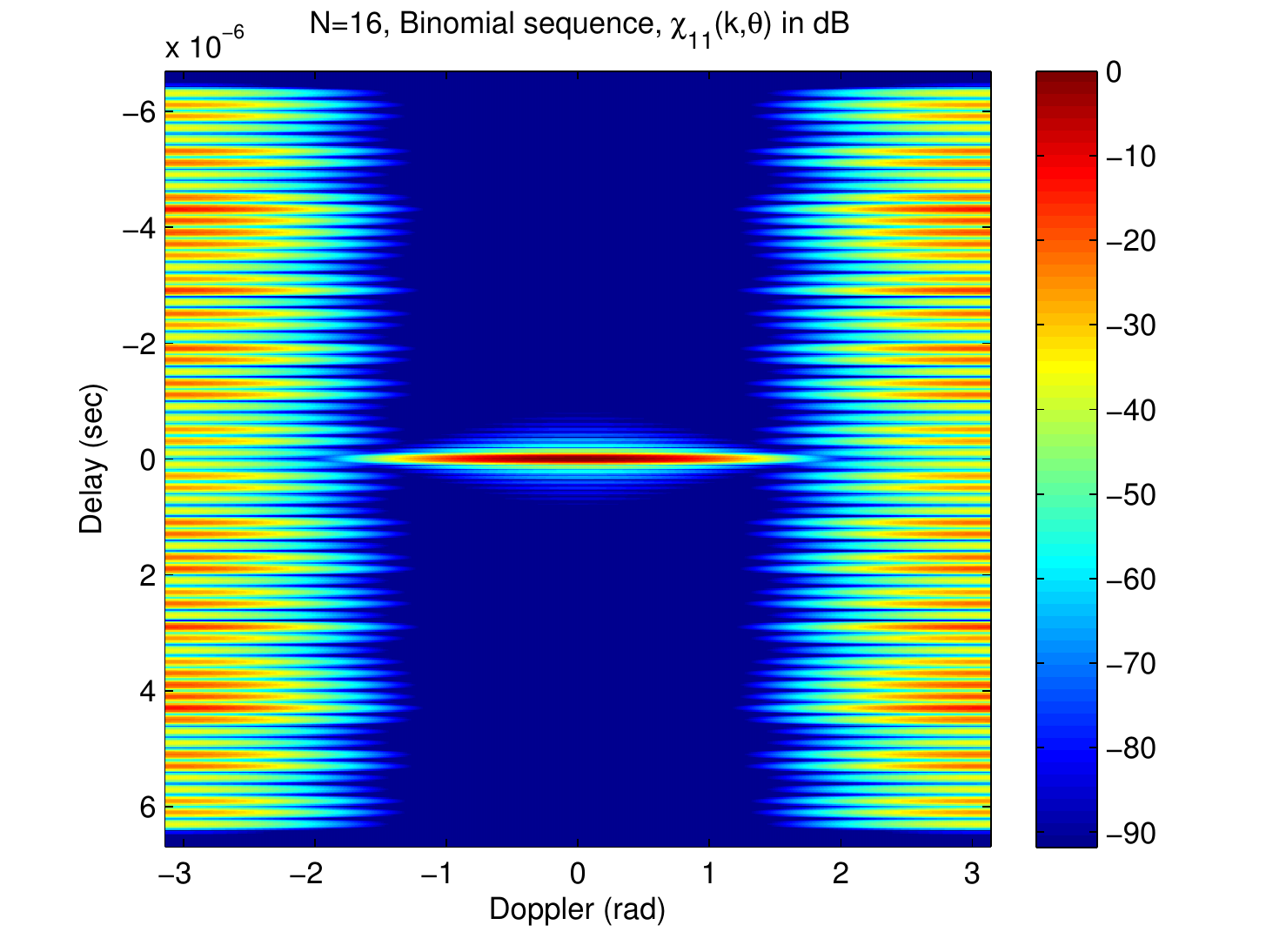}}&
\subfigure[]{\includegraphics[width=.22\textwidth]{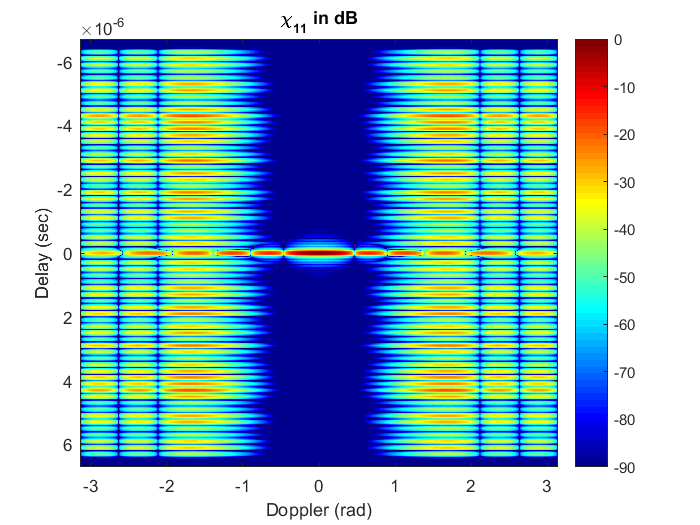}}\\
(a) Conventional (diagonal) & (b) PTM (diagonal) & (c) Binomial (diagonal) & (d) Max-SNR (diagonal)\\
\subfigure[]{\includegraphics[width=.22\textwidth]{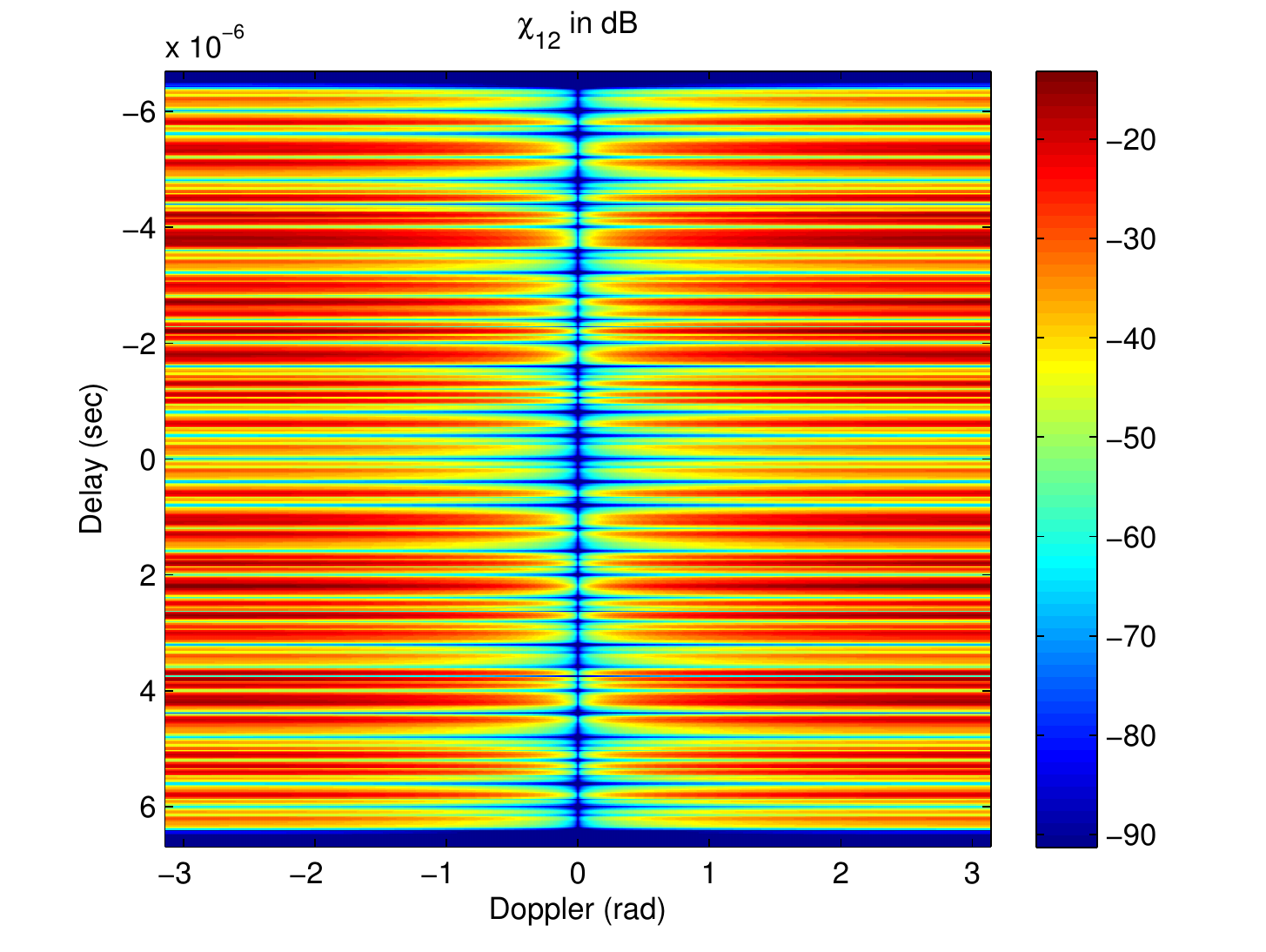}} & 
\subfigure[]{\includegraphics[width=.22\textwidth]{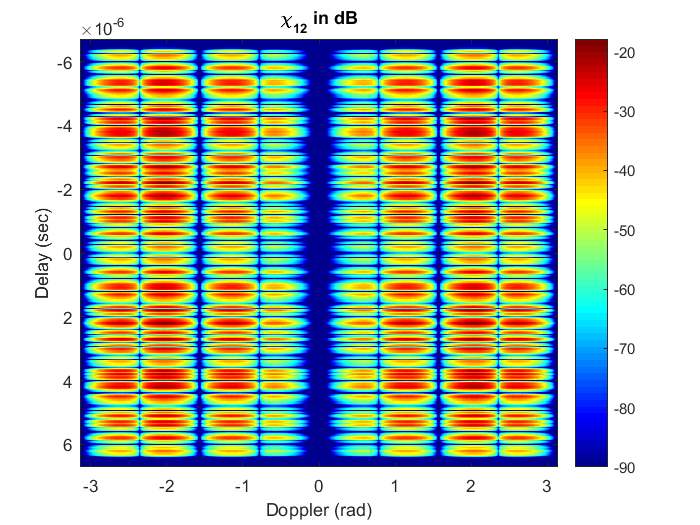}}&
\subfigure[]{\includegraphics[width=.22\textwidth]{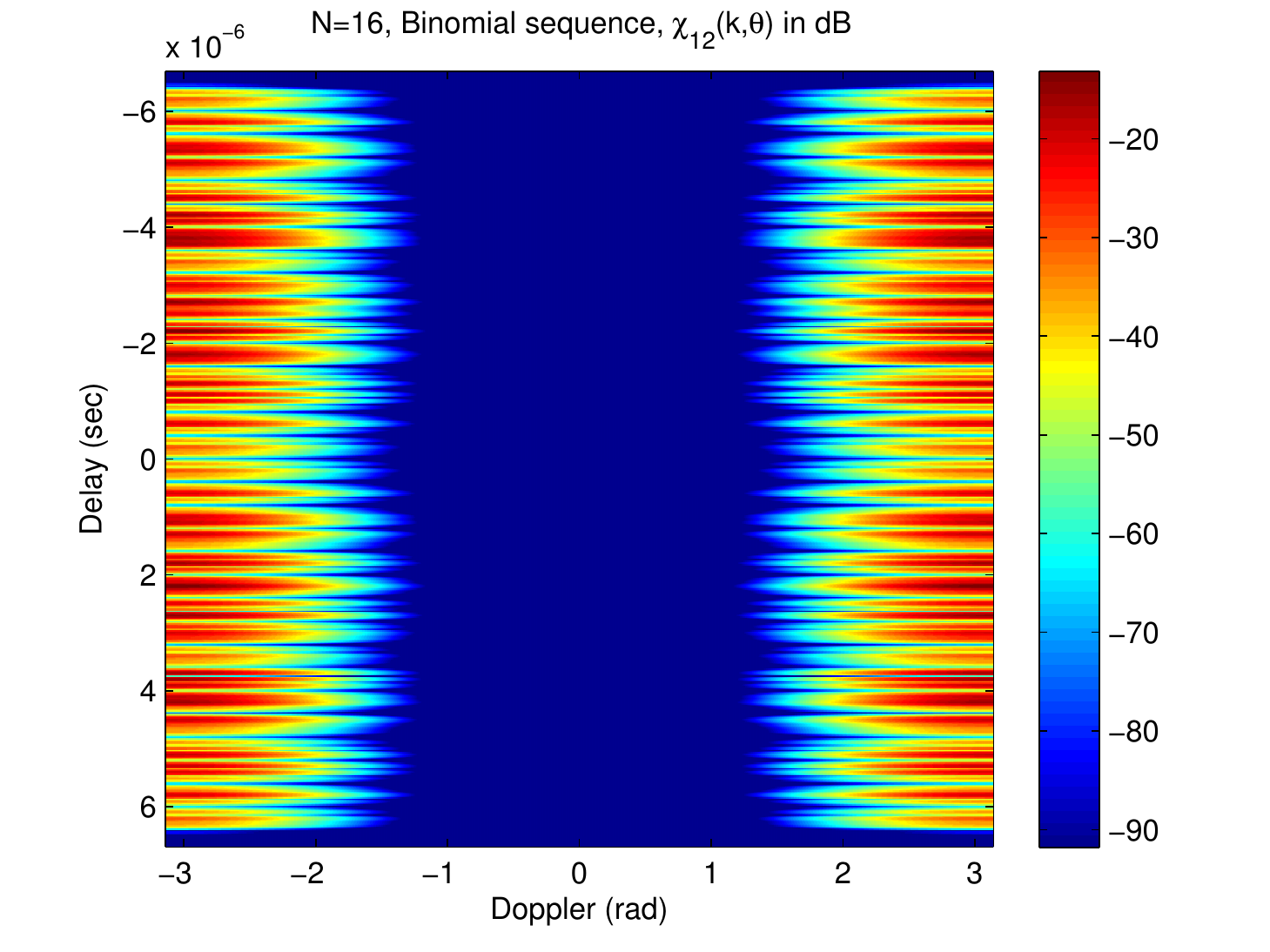}}&
\subfigure[]{\includegraphics[width=.22\textwidth]{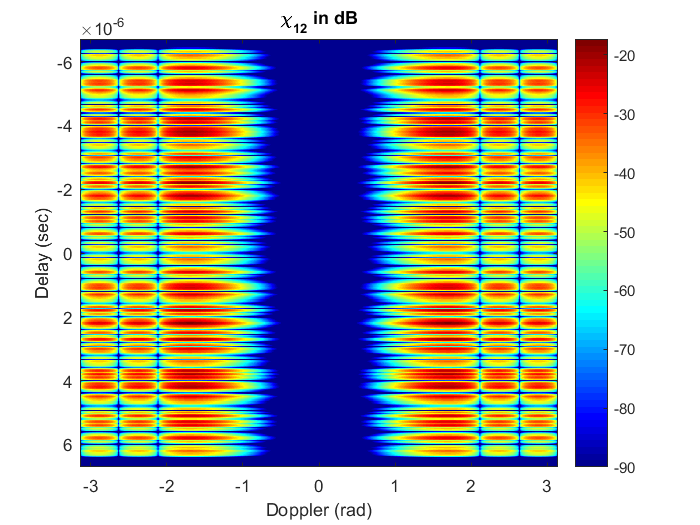}}\\
(e) Conventional (off-diagonal) & (f) PTM (off-diagonal) & (g) Binomial (off-diagonal) & (h) Max-SNR (off-diagonal)\\
\end{tabular}
\end{center}
\caption{Magnitudes of the diagonal and off-diagonal elements of the $2$-by-$2$ cross-ambiguity matrix, in Example \ref{ex:2by2}, for the conventional (plots (a) and (e)), the PTM (plots (b) and (f)), (c) the binomial (plots (c) and (g)), and the max-SNR (plots (d) and (h)) designs.}\label{fig:2by2}
\end{figure*}
\begin{figure*}[h]
\begin{center}
\begin{tabular}{cccc}
\subfigure[]{\includegraphics[width=.22\textwidth]{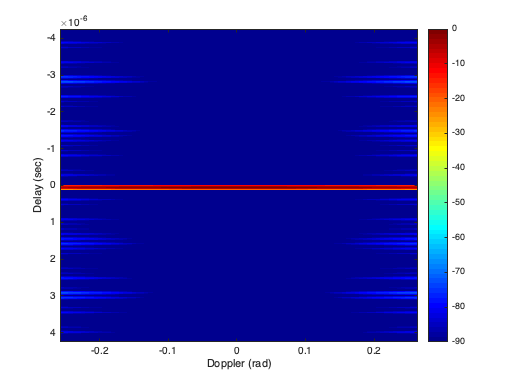}} & 
\subfigure[]{\includegraphics[width=.22\textwidth]{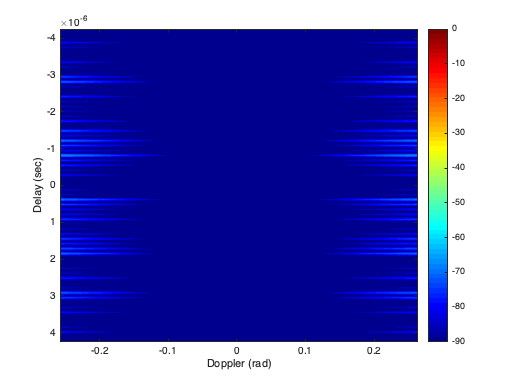}}&
\subfigure[]{\includegraphics[width=.22\textwidth]{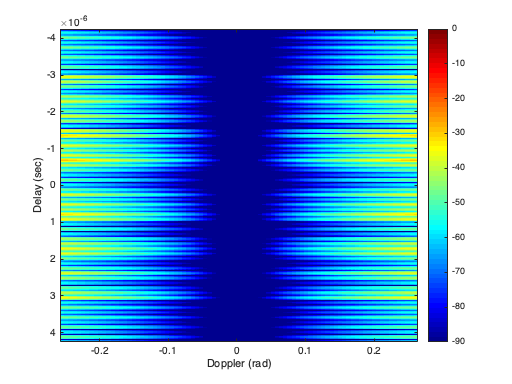}}&
\subfigure[]{\includegraphics[width=.22\textwidth]{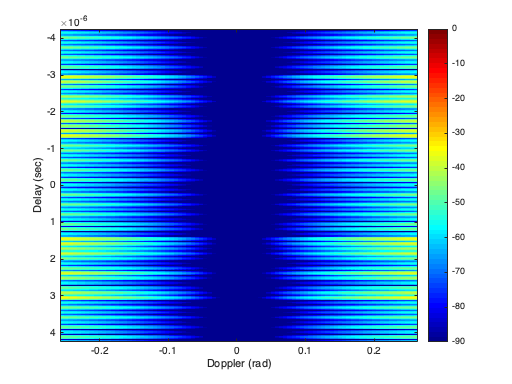}}\\
(a) $| \left[ \bchi_{PQ}\right]_{11} |$ & (b) $| \left[ \bchi_{PQ}\right]_{12} |$ & (c) $| \left[ \bchi_{PQ}\right]_{13} |$ & (d) $| \left[ \bchi_{PQ}\right]_{14} |$\\
\subfigure[]{\includegraphics[width=.22\textwidth]{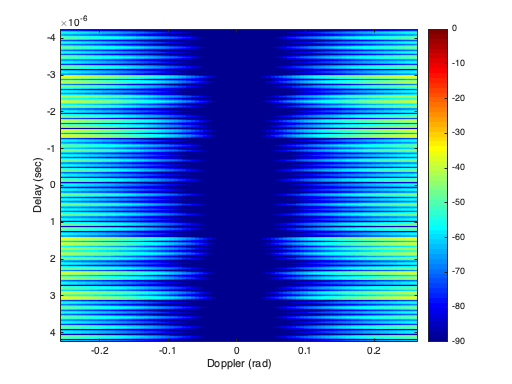}} & 
\subfigure[]{\includegraphics[width=.22\textwidth]{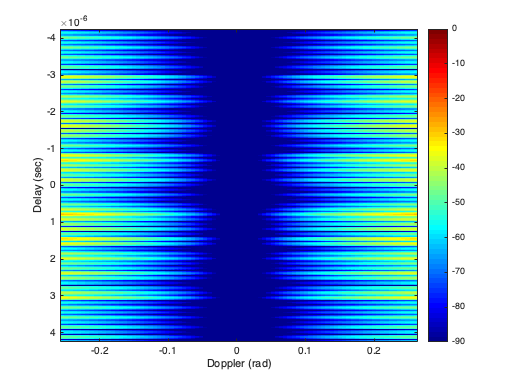}}&
\subfigure[]{\includegraphics[width=.22\textwidth]{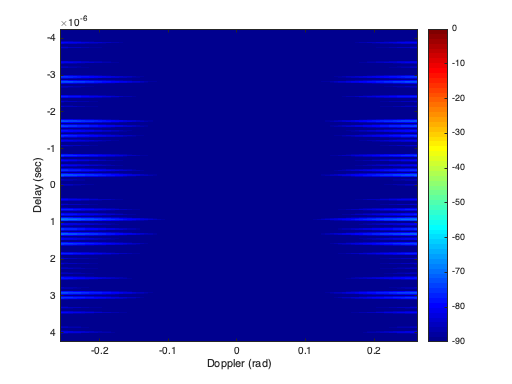}}&\\
(e) $| \left[ \bchi_{PQ}\right]_{23} |$ & (f) $| \left[ \bchi_{PQ}\right]_{24} |$ & (g) $| \left[ \bchi_{PQ}\right]_{34} |$ & 
\end{tabular}
\end{center}
\caption{Magnitudes of the elements of the $4$-by-$4$ cross-ambiguity matrix for Example \ref{ex:4by4}. Only the first diagonal element and the off-diagonal elements in the upper triangle of the cross-ambiguity matrix are shown in the figure, because the cross-ambiguity matrix in \eqref{eq:crossmat} is Hermitian symmetric and has identical diagonal elements by construction. The plots are shown only in the Doppler interval $(-\pi/12,\pi/12]$ to highlight sidelobe suppression in range around zero Doppler.}\label{fig:4by4}
\end{figure*}

\begin{example}\label{ex:4by4}
We now consider a MIMO case with $D=4$ collocated transceivers. The paraunitary matrix used in phase coding in this case is $\bS_4$, constructed as in \eqref{eq:S2k} with $k=2$. In this case, we have four complementary waveform vectors $\{\bx_d(t)\}_{d=0}^{3}$, and the control of range sidelobes is similar to that in Section \ref{sc:multi} and the example presented there is applicable here as well. Here the cross-ambiguity matrix $\bchi_{PQ}(k,\theta)$ is four-by-four. We wish for the diagonal elements of the cross-ambiguity matrix to look like a delta function in delay (range) in an interval around the zero Doppler axis and for the off-diagonal elements to vanish in an interval around the zero-Doppler axis. 

Figures~\ref{fig:4by4}(a)-(d) illustrate these effects for the $P$ and $Q$ designed in Example \ref{ex:akjbd} in Section \ref{sc:multi}. The figure shows the magnitude of the first diagonal element of the cross-ambiguity matrix and the magnitudes of the off-diagonal elements in the upper triangle of the cross-ambiguity matrix.  The magnitudes of the other diagonal elements are identical to that of the first diagonal element and hence are not shown separately. The plots for the other off-diagonal elements (in the lower triangle) are identical to the ones shown, because of the Hermitian symmetry of the cross-ambiguity matrix due to the identical ordering of Golay waveforms in $\bz_P$ and $\bz_Q$. The plots are shown only in the Doppler interval $(-\pi/12,\pi/12]$ to highlight sidelobe suppression in range around zero Doppler. Outside this interval the cross-ambiguity elements have large sidelobes, similar to what is observed in previous examples.        
\end{example}

\section{Conclusion}

In this paper we have presented a general approach to the
construction of radar
transmit-receive pulse trains with cross-ambiguity functions that are free
of range sidelobes inside an interval around the zero Doppler axis. The transmit
pulse train is constructed by a binary sequence $P$ that codes
the transmission of a pair of Golay complementary waveforms across
time. For the receiver pulse train each waveform is weighted by some
integer according to an integer sequence $Q$. The range sidelobes of
the cross-ambiguity function are shaped by the spectrum of essentially the product of $P$ and $Q$. By properly choosing the sequences $P$ and $Q$, 
the range sidelobes can be significantly reduced  inside an interval around the zero Doppler axis. A general way for constructing such sequences has been presented, by specifying the subspace (along with a basis) for sequences that have spectral nulls of a given order around zero Doppler.  The output signal-to-noise ratio (SNR) of $(P,Q)$  pairs depends only on the choice of $Q$. By jointly designing the transmit-receive sequences $(P,Q)$, we can maximize SNR subject to achieving a given order of the spectral null. A detailed comparison of two special cases of ${(P,Q)}$ pulse train design: PTM and Binomial design has been presented. 

We also have demonstrated that, for a larger set of complementary sequences, the desired $P$ and $Q$ sequences can be derived from an extension of the joint design of $P$ and $Q$ sequences for a Golay complementary pair.

We have also extend the construction of $(P,Q)$ pairs to multiple-input-multiple-output (MIMO) radar, by designing transmit-receive pairs of paraunitary waveform matrices whose matrix-valued cross-ambiguity function is essentially free of range sidelobes inside a Doppler interval around the zero-Doppler axis.

\bibliographystyle{IEEEtran}
\bibliography{reference,MIMORadar,DoppRef}
\end{document}